\documentclass{article}

\usepackage{authblk}
\usepackage[utf8]{inputenc} 
\usepackage[T1]{fontenc}    
\usepackage{hyperref}       
\usepackage{url}            
\usepackage{booktabs}       
\usepackage{amsfonts}       
\usepackage{nicefrac}       
\usepackage{microtype}      
\usepackage{lipsum}		
\usepackage{graphicx}
\usepackage{doi}
\usepackage{amsmath, amsthm}

\usepackage{multirow}
\usepackage{footnote}
\usepackage{float}
\usepackage{comment}
\usepackage{enumerate}
\usepackage{xcolor}
\usepackage{amssymb}
\usepackage{rotating}
\usepackage[numbers,super]{natbib}

\usepackage{listings}

\usepackage[verbose=true,letterpaper]{geometry}
\AtBeginDocument{
  \newgeometry{
    textheight=9in,
    textwidth=6.5in,
    top=1in,
    headheight=14pt,
    headsep=25pt,
    footskip=30pt
  }
}

\usepackage{setspace}
\newcommand{\one}{\mathbf{1}}

\newcommand{\be}{\mathbf{e}}

\newcommand{\bz}{\mathbf{z}}
\newcommand{\tauest}{\widehat{\tau}^w}

\newcommand{\bee}{\begin{eqnarray*}}
\newcommand{\eee}{\end{eqnarray*}}
\newcommand{\bi}{\begin{enumerate}}
\newcommand{\ei}{\end{enumerate}}

\newtheorem*{theorem*}{Theorem}

\newtheorem{theorem}{Theorem}

\newtheorem{lemma}{Lemma}

\title{Propensity score weighted Cox regression for survival outcomes in observational studies with multiple or factorial treatments}

\author[1]{Zixian Zhao}
\author[2]{Chengxin Yang}
\author[3]{Fan Li}
\affil[1]{Qiuzhen College, Tsinghua University}
\affil[2]{Department of Biostatistics and Bioinformatics, Duke University}
\affil[3]{Department of Statistical Science, Duke University, Durham NC 27708}

\begin{document}
\maketitle

\begin{abstract}
In observational studies with survival or time-to-event outcomes, a propensity score weighted marginal Cox proportional hazard model with the treatment variable as the only predictor is commonly used to estimate the causal marginal hazard ratio between two treatments. Observational studies often have more than two treatments, but corresponding analysis methods are limited. In this paper, we combine the propensity score weighting method for multiple treatments and a marginal Cox model with indicators for each treatment to estimate the causal hazard ratios between multiple treatments and a common reference treatment. We illustrate two weighting schemes: inverse probability of treatment weighting and overlap weighting. We prove the consistency of the maximum weighted partial likelihood estimator of the causal marginal hazard ratio and derive a robust sandwich variance estimator. As an important special case of multiple treatments, we elaborate the Cox model for two-way factorial treatments. We apply the method to evaluate the real-world comparative effectiveness of three types of anti-obesity medications on heart failure. We develop an associated R package \texttt{PSsurvival}.
\end{abstract}

\textbf{Keywords: causal inference, factorial design, hazard ratio, multiple treatment, propensity score, survival analysis, target trial emulation} 

\section{Introduction}
The Cox proportional hazard regression \cite{cox1972regression} is the most widely used model for survival or general time-to-event outcomes in medical studies. In causal inference with observational data, the Cox model is often combined with propensity scores \cite{rosenbaum1983central} to adjust for confounding. In particular, the coefficient of the treatment variable in a propensity weighted Cox model with the treatment variable as the only predictor is the causal marginal hazard ratio estimand. \cite{hernan2020causal,austin2017performance} The existing literature on weighted Cox models has been largely limited to binary treatments.   However, multiple treatments are common in clinical research. For example, our motivating application  aims to evaluate the real-world comparative effectiveness of three anti-obesity medications on heart failure endpoints: glucagon-like peptide-1 (GLP-1) receptor agonists, naltrexone-bupropion, and phentermine-topiramate; the patient population of each medication differs substantially in baseline characteristics and it is important to adjust for the differences in estimating the causal effects. There has been a growing literature on propensity score methods for multiple treatments with non-censored outcomes \cite{Feng2012,McCaffrey2013,lopez2017estimation,li2019propensity} and related software packages such as \texttt{PSweight}\cite{PSweight} and \texttt{twang},\cite{twang} but extension to time-to-event outcomes has been limited. 

A special case of multiple treatments is the factorial treatment design, in which multiple treatments and their combinations are studied simultaneously. For example, Simms et al. \cite{simms2024effect} studied the effect of the combination treatment with two popular weight loss medications GLP-1 and sodium-glucose cotransporter-2 inhibitors on clinical outcomes compared to each medication alone, which constitutes a two-way or $2\times2$ factorial design. A $K-$ way factorial design has $2^K$ nominal treatments in total. The factorial design originates from randomized experiments, \cite{kahan2023reporting} but has been increasingly common in population based observational studies. Its analysis is often tied to ANOVA, which is not directly applicable to survival outcomes. Slud \cite{slud1994analysis} studied the asymptotic properties of the Cox model in randomized controlled trials with a two-way factorial design. VanderWeele \cite{vanderweele2011causal} proposed a covariate-adjusted interacted Cox model with covariates to study additive causal interaction effects between two treatments, but the estimand does not have a causal interpretation in observational studies except under the null.\cite{hernan2010hazards} Yu and Ding \cite{yu2025balancing} developed a class of balancing weights for causal inference in observational factorial studies, but their method assumes a linear outcome model and thus is not applicable to survival outcomes with censoring. 

In this paper, we combine the propensity score weighting method for multiple treatments \cite{Feng2012,li2019propensity} and a marginal Cox model with indicators for each treatment to estimate the causal marginal hazard ratios between multiple treatments and a common reference treatment (Section \ref{sec:Cox-multi}). We elaborate on the special case of two-way factorial treatments and point out its difference from the interacted Cox model (Section \ref{sec:Cox-factorial}). We prove the consistency of the maximum weighted partial likelihood estimator and derive a robust sandwich variance estimator (Section \ref{sec:var-robust}). We conduct extensive simulations to evaluate the finite sample performance of the proposed method and compare two propensity weighting schemes: inverse probability weighting \cite{rosenbaum1987model} and overlap weighting.\cite{li2018balancing} We apply the proposed method to the motivating application aforementioned (Section \ref{sec:application}). We develop an associated R package \texttt{PSsurvival}\cite{PSsurvival} for propensity score weighting analysis with time-to-event outcomes and provide an illustration example with data and code in the Supplementary Material.

\section{Propensity score weighted hazard ratio estimator for multiple treatments}
\subsection{Weighted marginal Cox model and estimator} \label{sec:Cox-multi}
Consider an observational study with $n$ units. Each unit $i ~(i= 1, \dots, n)$ can receive one of $J+1$ ($J\geq 1$) treatments. Let $Z_i\in \mathbb{Z}=\{0,\ldots,J\}$ denote the observed treatment status. For notational convenience, let $Z=0$ be the control or reference group. Each unit $i$ has a failure time $T_i$, which is subject to right-censoring at an observed time $C_i$. Therefore, we only observe the survival time, $Y_i=\min(T_i,C_i)$, and the censoring indicator, $\delta_i = \one\{T_{i}\leq C_{i}\}$. For unit $i$, we also observe pre-treatment covariates $X_i$. The generalized propensity score is the conditional probability of unit $i$ being assigned to each treatment group given the covariates: $e_{i,j}(x)=\Pr(Z_i=j\mid X_i=x)$ for all $j$,  \cite{imbens2000role} and $\sum_{j}e_{i,j}=1$ for all $i$.

We adopt the potential outcomes framework to define causal effects.  Let $T_i(z)$ and $C_i(z)$ denote the potential survival and censoring time under treatment status $z\in \mathbb{Z}$, respectively, of which we only observe the ones corresponding to the observed treatment $T_{i}=T_{i}(Z_i), C_{i}=C_{i}(Z_i)$, respectively. There are multiple causal estimands for time-to-event outcomes, \cite{mao2018propensity}  we focus on the causal marginal hazard ratio (MHR) estimand because of its popularity in medical research. MHR was defined in the context of binary treatments, \cite{hernan2010hazards} and we extend the definition to multiple treatments. Specifically, consider the following marginal Cox proportional hazards model for the potential survival time $T(z)$ for $z\in \{1,\ldots,J\}$:
\begin{equation}
    \lambda_z(t)=\lambda_0 (t) \exp(\sum_{j=1}^J \tau_j \one\{z=j\}) \label{eq:cox_multi},
\end{equation}   
where \( \lambda_0(t) \) is the counterfactual baseline hazard function for the control group and $\tau_j$ is the log MHR of treatment group $j$ compared to the control group. Model \eqref{eq:cox_multi} is a marginal structural model \cite{hernan2000marginal} with treatment at a single time in the sense that it is imposed on the potential outcome rather than the observed outcome and is marginal over baseline covariates. 

Model \eqref{eq:cox_multi} excludes all covariates and thus usually serves as a working model for defining the causal estimand rather than a data-generating model. The parameter $\tau_j$ represents a population-level causal effect whose interpretation is grounded in counterfactual survival functions. Specifically, under the marginal proportional hazards assumption in Model \eqref{eq:cox_multi}, $e^{\tau_j} = \log S_j(t)/\log S_0(t)$ for all $t$, where $S_j(t)=\Pr(T(j) \geq t)$ is the (counterfactual) survival function under treatment $j$, providing a well-defined causal contrast that compares cumulative survival experience between group $j$ and the control group from time 0 through $t$. \cite{fay2024causal} Crucially, the causal validity of the MHR does not require Model \eqref{eq:cox_multi} to hold exactly as the data-generating process. Even when the marginal proportional hazards assumption fails, the MHR remains a well-defined population parameter that can be interpreted as a time-averaged hazard ratio over the study period. \cite{xu2000AverageHazard, fay2024causal} Although it depends on the censoring distribution and the joint distribution of $\{T(z); z \in \mathbb{Z}\}$ and may not admit a simple closed-form expression, the MHR retains a population-level causal interpretation as a summary contrast of counterfactual survivals over the study period. This stands in contrast to covariate-adjusted multivariable Cox models in observational studies, in which the coefficients of treatment indicators generally lack marginal causal interpretation due to the non-collapsibility of hazard ratio. \cite{hernan2010hazards, mao2018propensity,fay2024causal}  

In observational studies, the marginal distribution of the potential outcomes $E(Y(z))$ differs from that of the observed outcomes $E(Y\mid Z=z)$ due to confounding. Therefore, directly fitting the structural model \eqref{eq:cox_multi} to the observed data $(Y_i, \delta_i, Z_i)$ would lead to biased estimates of the coefficients $\tau_j$'s. Instead, we use propensity score weighting to adjust for confounding. We maintain three standard causal assumptions: (A1) SUTVA or consistency; (A2) weak unconfoundedness:\cite{imbens2000role} $T(j)\perp \one\{Z=j\} \mid X,~~~\forall~j\in \mathbb{Z}$; (A3) positivity or overlap: $e_j(X)>0$ for all $X$ and $j$. For the censoring mechanism, for simplicity, we assume (A4) treatment-dependent censoring: $C(j)\perp\{T(j),X\}$ for all $j$.   

The propensity score weighted estimator of $\tau$ with weight $w$, denoted by $\widehat{\tau}^w$, maximizes the weighted partial likelihood of the Cox model, \cite{cox1975partial,lin1989robust} i.e., $\widehat{\tau}^w$ solves the following estimating equation 
\begin{equation}\label{eq:estimating-equation}
    \sum_{i=1}^n w_i \delta_i \left\{\one\{Z_i=j\} - \frac{\sum_{l \in \mathcal{R}_i} w_l \exp(\sum_{k=1}^J \one\{Z_l=k\} \widehat{\tau}^w_k)\one\{Z_l=j\}}{\sum_{l \in \mathcal{R}_i} w_l \exp(\sum_{k=1}^J \one\{Z_l=k\} \widehat{\tau}^w_k)}\right\}=0 ,\quad j=1,\cdots, J
\end{equation}
where $\mathcal{R}_i = \{l: l = 1, \dots, n, Y_i\leq Y_l\}$ is the risk set for unit $i$ who experienced an event at $Y_i$, and $w$ is the propensity score weight. The mechanism of using propensity score weighting is to re-weight the study sample to create a target population in which distributions of the covariates between treatments are balanced and thus eliminate confounding.\cite{li2018balancing} It may be tempting to combine propensity weighting and a covariate adjusted multivariable Cox model to achieve double-robustness. However, Gabriel et al. \cite{gabriel2024propensity} show that such an approach generally does not lead to a doubly-robust estimator of the MHR. 

Choice of the weighting scheme determines the target population. The most common choice is the inverse probability weight (IPW):\cite{rosenbaum1987model,Feng2012,McCaffrey2013} $w_{i,j} = {1}/{e_{i,j}}$ for treatment $j \in \{0,...,J\}$. The IPW targets at the overall population, that is, the population that is represented by the entire observed sample. The IPW estimator is sensitive to propensities close to 0, leading to large variances and inferior finite sample performance. Extreme propensities are more common in multiple treatments, in which poor overlap between treatments is more prevalent. As illustrated in our application, often a small number of units with extreme propensities can dramatically change the result. A simple remedy is trimming, that is, removing subjects with extreme propensities, but this sometimes can result in a substantial loss of subjects and, more importantly, an opaque change in the target population. A closely related weighting scheme, the ATT weight, targets at the subpopulation who receives a specific treatment $j'$ ($\in \{0,...,J\}$), akin to the average treatment effect for the treated (ATT) concept in binary treatments: $w_{i,j} = {e_{i,j'}}/{e_{i,j}}$ for treatment $j \in \{0,...,J\}$. The ATT weight suffers from the same issue of extreme propensities as IPW. Another weighting scheme is the overlap weight (OW),\cite{li2019propensity} $w_{i,j}=(\sum_k  1/e_k(X_i))^{-1}/e_j(X_i)$ for  $j$. OW targets at the overlap population or patients in clinical equipoise, that is, the subpopulation with the most overlap in covariates between all treatments. OW gradually down-weights the subjects with small propensities instead of over-weighting them as IPW or ATT, and thus avoids the problem of extreme propensities. IPW, ATT, and OW all belong to the class of balancing weights,\cite{li2018balancing} among which OW leads to the smallest asymptotic total variance of all pairwise comparisons.  \cite{li2019propensity} In practice, researchers should choose a specific weighting scheme according to their study goal, as well as the feature of the data, in particular, the degree of overlap. With poor overlap between groups, different weights can lead to markedly different results.

For visualization, we can draw the weighted Kaplan-Meier survival curves of each treatment on the same plot. Specifically,  the weighted Kaplan-Meier curve for treatment $j$ is:    
$\widehat{S}^w_j(t)
=
\prod_{t_l \le t}
\left(
1 - \frac{D_l}{R_l}
\right),
$
with $
D_l = \sum_{i=1}^n w_{i,j} \, \mathbf{1}\!\left(T_i = t_l, \delta_i = 1, Z_i=j\right)$ and $ R_l = \sum_{i=1}^n w_{i,j} \, \mathbf{1}\!\left(T_i \ge t_l, Z_i=j\right)$. The weighted accumulated risk plot is simply $1-\widehat{S}^w_j(t)$. Note that weighting renders weighted Kaplan-Meier curves to lose the characteristic appearance of stepped lines.

\subsection{A special case: two-way factorial treatments} \label{sec:Cox-factorial}
In a two-way factorial design, two treatments (A and B) are studied simultaneously, with all combinations of their levels included. The design allows for studying the main effect of each treatment as well as their interactive effects. Denote the observed treatment of each unit $i$ by two binary indicators $\mathbf{Z}_i=(Z_{i1},Z_{i2})$, where $Z_{i1},Z_{i2}\in \{0,1\}$. Despite the notational difference, a two-way factorial design can be viewed as a special case of multiple treatments with four nominal treatments: $\mathbf{z}=(z_1,z_2)\in \{0,1\}^2$. For each unit, there are four propensity scores, $\be_i=(e_{i,00},e_{i,01},e_{i,10},e_{i,11})$, corresponding to each of the four combinations, and $\sum_{\bz\in \{0,1\}^2}e_{i,\bz}=1$ for each $i$. 

For causal inference, the marginal Cox model \eqref{eq:cox_multi} for the potential survival time under treatment $\bz$ has the explicit form: 
\begin{equation} 
   \lambda_{\mathbf{z}}(t)=\lambda_0 (t) \exp\left\{\tau_1 \one\{\bz=(1,0)\} +\tau_2 \one\{\bz=(0,1)\}+ \tau_3\one\{\bz=(1,1)\}\right\} \label{eq:Cox_factor},
\end{equation}   
where \( \lambda_0(t) \) is the counterfactual baseline hazard function for the reference group of units who receive neither treatment $\mathbf{z}=(0,0)$, and \( (\tau_1, \tau_2, \tau_3)\) is the marginal hazard between treatment A alone $\mathbf{z}=(1,0)$, treatment B alone  $\mathbf{z}=(0,1)$, both treatments (AB) $\mathbf{z}=(1,1)$ and no treatment, respectively. Similarly as in Section \ref{sec:Cox-multi}, we can estimate the causal MHRs on a target population by maximizing the weighted partial likelihood of Model \eqref{eq:Cox_factor} with propensity score weights corresponding to that population, e.g., the IPW $w_{i,\bz} = {1}/{e_{i,\bz}}$ for the overall population and the OW  $w_{i,\bz} = {(\sum_{\bz}  1/e_{\bz}(X_i))^{-1}}/{e_{i,\bz}}$ for the overlap population.

Model \eqref{eq:Cox_factor} is different from the interacted Cox model previously proposed for $2\times2$ factorial randomized trials:\cite{slud1994analysis,vanderweele2011causal, lin2016simultaneous} 
\begin{equation}\label{eq:Cox-interact}
    \lambda(t)=\lambda_0 (t) \exp(\beta_1 z_1+\beta_2 z_2+ \beta_3 z_1 z_2).
\end{equation}  
The key difference is that in Model \eqref{eq:Cox-interact} units in the combined treatment (AB) group are used to estimate the effect for both treatment A and B, whereas Model \eqref{eq:Cox_factor} treats the AB group as an independent group, the units of which are not used twice as units with treatment A or B. This separation in Model \eqref{eq:Cox_factor} avoids overlap of units between the combined treatment (AB) group and the single treatment (A or B) groups, and thus correlation between the main effects and the interaction term,  simplifying the inference. More importantly, in observational studies, the AB group is different from either the A or the B group in terms of covariates, and thus it is unreasonable to use the AB group to augment the estimation of either group. Specifically, it is challenging to interpret the coefficient of the interaction term $\beta_3$ in the interacted Cox model: (i) $\beta_3$ is not a marginal effect of the AB combined group compared to the reference group; (ii) in observational studies, $\exp{(\beta_1+\beta_2+\beta_3)}-\exp(\beta_1)-\exp(\beta_2)+1$ does not have a causal interpretation as in randomized trials \cite{vanderweele2011causal} because of confounding, a problem that cannot be solved by adding covariates to the Cox model.\cite{hernan2010hazards} Model \eqref{eq:Cox_factor} bypasses this challenge at the cost of efficiency loss because each unit is used only once in estimation.                

In theory, Model \eqref{eq:Cox_factor} can be extended to $K-$ way factorial designs with $K>2$ treatments. However, even with $K=3$, the total number of treatment combinations (eight) would become too large to have sufficient power to detect effects in most studies. Therefore, in practice, most factorial treatments studies involve the two-way factorial design. In the following, we will use the single-index treatment notation as in Section \ref{sec:Cox-multi}, unless otherwise specified.

\subsection{Consistency and robust estimator of variance} \label{sec:var-robust}
In this section, we prove the consistency of the estimator $\widehat\tau^w$ and use the M-estimation technique \cite{lunceford04} to derive a robust sandwich estimator of its variance that accounts for the uncertainty in estimating the propensity scores (weights). \cite{lin1989robust,binder1992fitting,shu2021variance} 
For each unit $i$, define the $J$-dimensional treatment indicator vector $\mathbf{D}_i=(\one\{Z_i=1\},\one\{Z_i=2\},\ldots,\one\{Z_i=J\})^{\prime}$,
and let $\eta_i(\tau)=\tau^{\prime}\mathbf{D}_i=\sum_{j=1}^J \tau_j \one\{Z_i=j\}$.
We consider a multinomial logistic regression model for the propensity score: $e_j(x;\gamma)=\exp(\gamma_j^{'} x)/(1+\sum_{k=1}^{J}\exp(\gamma_k^{'} x)), \ j=1,\ldots,J$, and define $\mathbf e(x;\gamma)=(e_1(x;\gamma),\ldots,e_J(x;\gamma))^{'}$. 

The main result is in Theorem \ref{thm:variance} and the proof is given in the Appendix. First, define the weighted risk-set processes
\begin{align*}
S^{(0)}(t;\tau, \gamma)
&=\frac{1}{n}\sum_{i=1}^n w_i(\gamma) Y_i(t)e^{\eta_i(\tau)},\\
\mathbf{S}^{(1)}(t;\tau, \gamma)
&=\frac{1}{n}\sum_{i=1}^n w_i(\gamma) Y_i(t)\mathbf{D}_i e^{\eta_i(\tau)},\\
\mathbf{S}^{(2)}(t;\tau, \gamma)
&=\frac{1}{n}\sum_{i=1}^n w_i(\gamma) Y_i(t)\mathbf{D}_i\mathbf{D}_i^{\prime} e^{\eta_i(\tau)},
\end{align*}
where $Y_i(t)=\one\{Y_i\ge t\}$ and the weighted average treatment in the risk set
\[
\bar{\mathbf{D}}(t;\tau, \gamma)=\frac{\mathbf{S}^{(1)}(t;\tau, \gamma)}{S^{(0)}(t;\tau, \gamma)}.
\]
Let $(\widehat\tau^w,\hat\gamma)$ be the solution to the joint estimating equations:
\begin{equation}
\sum_{i=1}^n
\phi_i(\tau,\gamma)=\sum_{i=1}^n
\begin{pmatrix}
\psi_i(\tau,\gamma)\\
\pi_i(\gamma)
\end{pmatrix}
=
\mathbf 0,
\label{eq:stacked}
\end{equation}
where $\psi_i(\tau,\gamma)=w_i(\gamma)\delta_i\{\mathbf{D}_i-\bar{\mathbf{D}}(Y_i;\tau,\gamma)\}$ and $\pi_i(\gamma)=\big(\mathbf D_i-\mathbf e(X_i;\gamma)\big)\otimes X_i$.

\begin{theorem}
    \label{thm:variance}
Under Assumptions A1-A4 and standard regularity conditions, the weighted estimator $\tauest$ with a balancing weight $w$ is consistent to the target log marginal hazard ratio $\tau^w$. The asymptotic covariance matrix of $(\widehat\tau^{w'},\hat\gamma^{'})^{'}$ is consistently estimated by the
robust sandwich estimator
\begin{equation}\label{eq:var-mat}
    \widehat{\mathrm{Var}}
\begin{pmatrix}
\widehat\tau^w\\
\hat\gamma
\end{pmatrix}
=
\hat{\mathbf{A}}(\widehat\tau^w,\hat\gamma)^{-1}
\hat{\mathbf{B}}(\widehat\tau^w,\hat\gamma)
\hat{\mathbf{A}}(\widehat\tau^w,\hat\gamma)^{-T},
\end{equation}
where
\[
\hat{\mathbf{A}}(\tau,\gamma)
=-\frac{1}{n}\sum_{i=1}^n
\frac{\partial}{\partial(\tau^{'},\gamma^{'})}
\phi_i(\tau,\gamma), \qquad
\hat{\mathbf{B}}(\tau,\gamma)=\frac{1}{n}\sum_{i=1}^n
\Phi_i(\tau,\gamma)\Phi_i(\tau,\gamma)^{\prime},
\]
and
\[
\Phi_i(\tau,\gamma)=\begin{pmatrix}
\Psi_i(\tau,\gamma)\\
\pi_i(\gamma)
\end{pmatrix},
\]
and
\begin{align*}
\Psi_i(\tau,\gamma)
=
w_i(\gamma)\delta_i\{\mathbf{D}_i-\bar{\mathbf{D}}(Y_i;\tau,\gamma)\}-
w_i(\gamma)\sum_{j=1}^n
\frac{w_j(\gamma)\delta_j I(Y_j\le Y_i)e^{\eta_i(\tau)}}
     {S^{(0)}(Y_j;\tau,\gamma)}
\{\mathbf{D}_i-\bar{\mathbf{D}}(Y_j;\tau,\gamma)\}.
\end{align*}
\end{theorem}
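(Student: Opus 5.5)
The plan is to treat $(\widehat\tau^w,\hat\gamma)$ as a Z-estimator solving the stacked equations \eqref{eq:stacked}, and to handle two things separately. The first is the identification/consistency statement for $\tau^w$. The second is the asymptotic linearization that yields the sandwich form \eqref{eq:var-mat}.

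\textbf{Consistency.} Under correct specification of the multinomial logistic model, $\hat\gamma\to\gamma_0$ by standard M-estimation, since $\pi_i$ is the multinomial score. For $\tau$, we use the uniform law of large numbers (Glivenko--Cantelli for the monotone processes $Y_i(t)$, the weights being bounded for OW and bounded under strict positivity A3 for IPW/ATT). This gives $S^{(r)}(t;\tau,\hat\gamma)\to s^{(r)}(t;\tau,\gamma_0)$ uniformly in $t\in[0,\tau_{\max}]$ and $\tau$ in a compact set. Here $s^{(r)}$ are the population limits.

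The key identification step is to evaluate these limits using the balancing property of the weights. For a balancing weight with tilting function $h(x)$, i.e. $w_j(x)=h(x)/e_j(x)$, assumptions A1--A2 give
\[
E\{w(X)\one\{Z=j\}g(T,C)\}=E\{h(X)\,g(T(j),C(j))\}.
\]
Combined with A4, each weighted group's at-risk and counting processes have the same limits as in the target population defined by $h$, under the counterfactual $T(j)$ and censoring $C(j)$.

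Consequently, the limiting weighted score
\[
u(\tau)=\int \{s^{(1)}_{N}(t)-\bar d(t;\tau)\,s^{(0)}_{N}(t)\}\,dt
\]
is the score of an ordinary (unweighted) partial likelihood fitted to a pseudo-population in which the $j$th arm consists of the $h$-tilted counterfactual survival and censoring times. The limiting log partial likelihood is strictly concave in $\tau$, since its Hessian is minus a weighted covariance of $\mathbf D$, which is positive definite when every arm has positive at-risk mass. Hence $u$ has a unique root. We define this root to be $\tau^w$; when the marginal model \eqref{eq:cox_multi} holds in the target population it equals the true log MHR. Consistency of $\widehat\tau^w$ then follows from the standard argument for concave criteria (e.g. Andersen--Gill, or pointwise convergence of concave functions implying convergence of maximizers).

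\textbf{Asymptotic normality and variance.} We Taylor-expand
\[
0=n^{-1}\sum_i\phi_i(\widehat\tau^w,\hat\gamma)\quad\text{around }(\tau^w,\gamma_0),
\]
with $\hat{\mathbf A}$ consistently estimating the derivative matrix. We then need $n^{-1/2}\sum_i\psi_i(\tau^w,\gamma_0)$ written as a sum of i.i.d. terms. The difficulty is that $\psi_i$ depends on all units through $\bar{\mathbf D}(Y_i)$, so the summands are not independent.

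The planned step, which I expect to be the main obstacle, is a Hájek/von Mises projection. We write
\[
\bar{\mathbf D}=\frac{\mathbf S^{(1)}}{S^{(0)}},
\]
and linearize it around its limit $\bar d=s^{(1)}/s^{(0)}$. The perturbation $(\mathbf S^{(1)}-s^{(1)})-\bar d\,(S^{(0)}-s^{(0)})$, once integrated against the weighted event measure $n^{-1}\sum_j w_j\,dN_j(t)/s^{(0)}(t)$, produces a contribution from unit $i$ of the form
\[
-\,w_i\int Y_i(t)e^{\eta_i}\{\mathbf D_i-\bar d(t)\}\,\frac{d\bar N^w(t)}{s^{(0)}(t)} .
\]
Replacing the limits by their empirical versions gives exactly the second term of $\Psi_i$: the sum over $j$ with $Y_j\le Y_i$, divided by $S^{(0)}(Y_j)$. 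Thus
\[
n^{-1/2}\sum_i\psi_i=n^{-1/2}\sum_i\Psi_i+o_p(1).
\]
Rigorously bounding the remainder requires uniform $\sqrt n$-convergence of $S^{(r)}$, which holds via Donsker classes, together with $s^{(0)}$ being bounded away from zero on $[0,\tau_{\max}]$. These are the "standard regularity conditions."

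Because the propensity estimation is stacked rather than plugged in, its effect enters through the off-diagonal block $\partial\psi/\partial\gamma$ of $\hat{\mathbf A}$; no extra term is needed in $\Psi_i$. The multivariate CLT applied to the i.i.d. terms $\Phi_i$ then gives
\[
\sqrt n\,\bigl((\widehat\tau^w-\tau^w)',(\hat\gamma-\gamma_0)'\bigr)'\to N\bigl(0,\;A^{-1}BA^{-T}\bigr).
\]
Consistency of $\hat{\mathbf A}$ and $\hat{\mathbf B}$, evaluated at the estimates, follows from the uniform convergence already established and continuity in $(\tau,\gamma)$. Together these yield \eqref{eq:var-mat}.
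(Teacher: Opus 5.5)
Your proposal is correct. Its asymptotic-normality half follows the paper's argument: linearize $\bar{\mathbf D}=\mathbf S^{(1)}/S^{(0)}$ around $\bar{\mathbf d}=\mathbf s^{(1)}/s^{(0)}$, collect unit $i$'s contribution into an i.i.d. influence term, stack it with the multinomial score, and let $\hat\gamma$ enter through the off-diagonal block of $\mathbf A$.

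The consistency half takes a different route. The paper assumes the marginal Cox model holds in the $h$-tilted target population. It writes the limiting log partial likelihood through explicit weighted-moment identities (its Lemmas 1--2) and verifies by direct computation (Lemma 3) that the limiting score vanishes at the structural $\tau^w$. It then passes to $\hat\tau(\gamma^*)\to\tau^w$ without arguing that this root is unique. You instead read the limit as an unweighted partial likelihood in a covariate-balanced pseudo-population, get uniqueness from strict concavity, and conclude with the convexity lemma.

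Each route buys something. Yours supplies the uniqueness step the paper skips. It also defines $\tau^w$ even when marginal proportional hazards fails, which is what the main text's ``time-averaged hazard ratio'' interpretation needs. The paper's route explicitly proves what you only assert: that this root is the structural log MHR when the target model holds. In your notation that check is one line. Arm $j$ has limiting at-risk mass $E[h(X)S_j(t\mid X)]G_j(t)$ and event intensity $E[h(X)\lambda_j(t\mid X)S_j(t\mid X)]G_j(t)$. Their ratio is $\lambda_0^{\mathrm{target}}(t)e^{\tau_j^w}$, so the integrand of $u(\tau^w)$ vanishes for every $t$.

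That check is exactly where the within-arm censoring law must not depend on $X$. Since unconfoundedness is assumed only for $T(j)$, your identity $E\{w\one\{Z=j\}g(T,C)\}=E\{h(X)g(T(j),C(j))\}$ should be read with $C(j)$ an independent draw from its law given $Z=j$ (the paper's $G_j$).

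Two small corrections on the variance side. First, $S^{(0)}$ already carries the factor $1/n$. So your projection, like the empirical plug-in of the paper's own $\psi_i^*$, yields $\frac{1}{n}\sum_{j=1}^n$ in the second term of $\Psi_i$. The displayed $\sum_{j=1}^n$ is off by a factor of $n$ and would make $\hat{\mathbf B}$ diverge, so ``exactly'' holds only for the corrected normalization. Second, the CLT applies to the i.i.d. population-level terms $\phi_i^*$, not to $\Phi_i$, which depend on the whole sample. $\hat{\mathbf B}$ is consistent because $n^{-1}\sum_i\|\Phi_i(\widehat\tau^w,\hat\gamma)-\phi_i^*(\tau^w,\gamma^*)\|^2\to0$ in probability.
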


Based on Theorem \ref{thm:variance}, a consistent estimator of $\mathrm{Var}(\widehat\tau^w)$ is the upper-left $J\times J$ block of the variance estimator matrix in \eqref{eq:var-mat}, and the 95\%  confidence interval of the hazard ratio is $e^{\widehat\tau^w\pm 1.96se}$, where $se$ is the square root of the corresponding estimated variance. Clearly, the confidence interval is not symmetric because of the exponentiation. We can also use bootstrap to obtain the standard errors of $\widehat{\tau}^w$. In the context of binary treatments, simulations in Austin \cite{austin2016performance} have found that the performance of the bootstrap variance estimator was often superior to the robust sandwich variance estimator. However, when the event rate is low,  some bootstrap samples might contain no events and thus cause computational issues. 



\section{Simulation} \label{sec:simulation}
We conduct simulation studies to evaluate the finite-sample performance of the proposed estimators. We consider two data-generating mechanisms. The first setting involves a single treatment with three levels, and the second setting involves a two-way factorial design with two binary treatments.

\subsection{Setting 1: Multiple Treatments}\label{sec:Simu-multi}
Our simulation design builds on that in Cheng et al. \cite{cheng2021addressing} and extends to multiple treatments. Each unit has six baseline covariates: $(X_1,X_2,X_3)$ were generated from a multivariate normal distribution with mean zero, unit variances, and pairwise correlation $0.5$, and $(X_4,X_5,X_6)$ were independently generated from a Bernoulli$(0.5)$ distribution and then centered by subtracting $0.5$. For each unit, the treatment variable $Z\in\{0,1,2\}$ was generated from a multinomial logistic model with propensity scores
\[
e_j(X)=\Pr(Z=j\mid X)
=\frac{\exp\{\eta_j(X)\}}{\sum_{k=0}^2\exp\{\eta_k(X)\}},
\]
where $\eta_0(X)\equiv 0$, $\eta_1(X)=\alpha+ \psi\, b^\top X$, and $\eta_2(X)=\alpha -\psi\, b^\top X$. The vector $b=(0.6, -0.4, 0.3, 0.2, -0.1, 0.15)^{'}$ was fixed and normalized to unit length. The scalar parameter $\psi$ controls the degree of covariate overlap across treatment groups. We considered $\psi\in\{1,2,3\}$, representing strong, moderate, and weak overlap scenarios, respectively.
For each value of $\psi$, the intercept parameter $\alpha$ was chosen to achieve treatment prevalence rates of $({1}/{3},{1}/{3},{1}/{3})$  for $Z=0,1,2$, respectively.
The distribution of true propensity scores under three levels of overlap are delegated to the Supplementary Material. 

The potential event times $\{T(j):j=0,1,2\}$ were generated from a conditional proportional hazards model,
\begin{equation} \label{eq:outcome-sim}
    \lambda_j(t\mid X)=\lambda_0(t)\exp(\theta_j+\beta^\top X),
\qquad \theta_0=0,
\end{equation}
where $\lambda_0(t)$ is a Weibull baseline hazard function with a shape parameter of $1.2$ and a scale parameter of $1.0$. The conditional log hazard ratios were set to $(\theta_1, \theta_2)=(0.35,-0.20)$,
and the covariate coefficients were set to
$\beta = (1.2,-0.9, 0.8, 0.6, -0.3, 0.4)^{'}$.
Censoring times were generated independently from an exponential distribution $C\sim \mathrm{Exp}(\lambda_c)$, with $\lambda_c$ adjusted to achieve approximately $25\%$ or $50\%$ censoring. The observed time and event indicator were $Y=\min(T,C)$ and $\delta=\mathbf 1(T\le C)$, respectively. This simulation model with $\psi=1$ and a censoring rate $25\%$ and $50\%$ leads to a 56\% and 45\% event rate at time $t=1$, respectively. The event rates vary between treatment groups, but remain similar between different degrees of overlap. Details of the event rate are delegated to the Supplementary Material.     

For each weighting scheme $w$, the target estimand $\tau^w=(\tau_1^w,\tau_2^w)'$ was defined through the marginal Cox model in the corresponding target population. Because $\tau_j^w$ generally does not admit a closed-form expression, we numerically calculated the true value by solving the Cox estimating equation using a large Monte Carlo sample generated from the true potential survival times and covariates distribution. The overlap population adapts according to the degree of overlap whereas the overall population remains the same. 
In our simulations, the true log marginal hazard ratios for the overall population were $(\tau_1^{ipw},\tau_2^{ipw})^{'}=(0.17, - 0.10)^{'}$,  and for the overlap population $(\tau_1^{ow},\tau_2^{ow})^{'}$ were $(0.20, - 0.12)^{'}$ for $\psi=1$, $(0.25, -0.15)^{'}$ for $\psi=2$, and $(0.27, -0.17)^{'}$ for $\psi=3$.  The outcome generating model \eqref{eq:outcome-sim} is a multivariable (conditional) Cox model, which usually does not transform to the marginal model \eqref{eq:cox_multi} when we average over the covariates. As discussed earlier, in these cases, the MHR defined through the weighted marginal Cox model \eqref{eq:cox_multi} is still a well-defined causal estimand. 

For each simulation configuration, we generated 1000 simulation replicates with each replicate of size $n=1000$.  In each replicate, we estimated the generalized propensity scores using a multinomial logistic regression model including a main effect of each of the six covariates $X_1,\cdots,X_6$. Marginal hazard ratios were estimated using the weighted Cox models with either the inverse probability weights (IPW) or the overlap weights (OW). Variance estimation was obtained using both the robust sandwich estimator derived in Section \ref{sec:var-robust} and bootstrap with 200 resamples.

We compared the proposed weighted estimators with two alternatives:
(i) a naive Cox model including only treatment indicators, and
(ii) a multivariable Cox model including both treatment indicators and the covariates. In both methods, the point estimate of the log hazard ratio is the coefficient of the treatment variable and the variance estimates were obtained from the partial likelihood-based Fisher information matrix. Note that for these two methods, the estimand is the marginal hazard ratio in the overall population and the inference was conducted with respect to that population.  For each method, we evaluated the relative bias with respect to the target estimand ($|e^{\widehat{\tau}}-e^{\tau}|/e^{\tau}$), standard errors estimated from bootstrap and the robust estimator, and the empirical 95\% coverage rate across the 1000 replicates. 

\begin{table}[ht]
\centering
\caption{Results of the simulation with three treatments described in Section \ref{sec:Simu-multi}. Rel.Bias is the relative bias, and SE(ro) and SE(bs) are the standard errors of the log marginal hazard ratio obtained from the robust sandwich estimator and bootstrap, respectively.}
\label{tab:results_multi_1000}
\resizebox{\textwidth}{!}{%
\begin{tabular}{cccc*{2}{cccc}}
\toprule
Degree of  & Censor  & & \multicolumn{4}{c}{$\tau_1^w$} & \multicolumn{4}{c}{$\tau_2^w$}  \\
\cmidrule(lr){4-7} \cmidrule(lr){8-11} 
overlap   & Rate & Method & Rel. Bias & Coverage & SE (ro) & SE (bs) & Rel. Bias & Coverage & SE (ro) & SE (bs)  \\
\midrule
\multirow{4}{*}{Strong, $\psi=1$} 
& \multirow{4}{*}{$25\%$} 
& IPW 
& 0.04 & 0.94 & 0.14 & 0.12
& 0.00 & 0.98 & 0.11 & 0.08 \\

& & OW 
& 0.02 & 0.97 & 0.11 & 0.10
& 0.00 & 0.98 & 0.09 & 0.08 \\

& & Naive 
& 0.79 & 0.00 & 0.09 & 0.09
& -0.42 & 0.00 & 0.10 & 0.10 \\

& & multivariable 
& 0.20 & 0.48 & 0.09 & 0.09
& -0.09 & 0.80 & 0.10 & 0.10 \\
\cmidrule(lr){1-11}
\multirow{4}{*}{Strong, $\psi=1$} 
& \multirow{4}{*}{$50\%$} 
& IPW 
& 0.06 & 0.96 & 0.15 & 0.12
& -0.02 & 0.97 & 0.15 & 0.12 \\

& & OW 
& 0.06 & 0.97 & 0.13 & 0.11
& -0.01 & 0.98 & 0.13 & 0.11 \\

& & Naive 
& 0.89 & 0.00 & 0.10 & 0.10
& -0.46 & 0.00 & 0.13 & 0.13 \\

& & multivariable 
& 0.20 & 0.61 & 0.11 & 0.11
& -0.09 & 0.87 & 0.13 & 0.13 \\
\cmidrule(lr){1-11}
\multirow{4}{*}{Moderate, $\psi=2$} 
& \multirow{4}{*}{$25\%$} 
& IPW 
& 0.16 & 0.77 & 0.22 & 0.21
& -0.02 & 0.90 & 0.17 & 0.15 \\

& & OW 
& 0.03 & 0.95 & 0.14 & 0.13
& 0.00 & 0.96 & 0.11 & 0.10 \\

& & Naive 
& 1.40 & 0.00 & 0.09 & 0.09
& -0.56 & 0.00 & 0.10 & 0.10 \\

& & multivariable 
& 0.20 & 0.52 & 0.10 & 0.10
& -0.09 & 0.82 & 0.11 & 0.11 \\

\cmidrule(lr){1-11}
\multirow{4}{*}{Moderate, $\psi=2$} 
& \multirow{4}{*}{$50\%$} 
& IPW 
& 0.15 & 0.85 & 0.23 & 0.21
& -0.05 & 0.84 & 0.22 & 0.21 \\

& & OW 
& 0.05 & 0.95 & 0.16 & 0.15
& -0.01 & 0.96 & 0.15 & 0.15 \\

& & Naive 
& 1.53 & 0.00 & 0.10 & 0.10
& -0.60 & 0.00 & 0.14 & 0.14 \\

& & multivariable 
& 0.21 & 0.64 & 0.11 & 0.11
& -0.09 & 0.90 & 0.14 & 0.14 \\

\cmidrule(lr){1-11}
\multirow{4}{*}{Weak, $\psi=3$} 
& \multirow{4}{*}{$25\%$} 
 & IPW 
& 0.35 & 0.62 & 0.26 & 0.25
& -0.12 & 0.76 & 0.20 & 0.19 \\

& & OW 
& 0.02 & 0.94 & 0.16 & 0.16
& 0.00 & 0.96 & 0.13 & 0.13 \\

& & Naive 
& 1.80 & 0.00 & 0.09 & 0.09
& -0.63 & 0.00 & 0.10 & 0.10 \\

& & multivariable 
& 0.20 & 0.57 & 0.10 & 0.10
& -0.09 & 0.85 & 0.11 & 0.11 \\

\cmidrule(lr){1-11}
\multirow{4}{*}{Weak, $\psi=3$} 
& \multirow{4}{*}{$50\%$} 
& IPW 
& 0.33 & 0.70 & 0.28 & 0.26
& -0.18 & 0.70 & 0.27 & 0.27 \\

& & OW 
& 0.05 & 0.94 & 0.19 & 0.19
& 0.00 & 0.95 & 0.19 & 0.19 \\

& & Naive 
& 1.96 & 0.00 & 0.10 & 0.10
& -0.66 & 0.00 & 0.14 & 0.14 \\

& & multivariable 
& 0.21 & 0.69 & 0.12 & 0.12
& -0.09 & 0.90 & 0.15 & 0.15 \\
\bottomrule
\end{tabular}%
}
\end{table}

Table \ref{tab:results_multi_1000} summarizes the simulation results.
The naive Cox estimator exhibits substantial bias and dismal coverage rate across all scenarios, regardless of sample size or censoring rate. This behavior is expected, as the naive estimator ignores confounding in baseline covariates and therefore targets neither the marginal nor the conditional treatment effect. The multivariable Cox model improves over the naive estimator, but still leads to large bias (around 20\% for $\tau_1^w$ and 10\% for $\tau_2^w$) and low coverage throughout the settings. Although this model is correctly specified for the conditional hazard, its coefficient of the treatment variable generally does not recover the marginal hazard ratio due to the non-collapsibility of the Cox model.

The IPW estimator performs well under strong overlap ($\psi=1$), with small relative bias below 6\% and high coverage. However, as covariate overlap becomes weaker ($\psi=2,3$), the IPW estimator exhibits increasing bias, substantial under coverage, and large variance. In particular, under moderate overlap ($\psi=2$), the relative bias for $\tau_1^w$ increases to around 15\%, while under weak overlap ($\psi=3$) it further rises to approximately 30\%, with coverage dropping to about 60-70\%. In addition, robust sandwich variance estimates become more conservative (larger) relative to bootstrap estimates. This pattern is well known for IPW estimators because of the extreme propensities in the presence of poor overlap. In contrast, the OW estimator leads to reliable estimation and inference for the target estimand across all overlap and censoring scenarios: the relative bias remains negligible, within 6\% for $\tau_1^w$ and 1\% for $\tau_2^w$, and the empirical coverage rate is close to the nominal level even under weak overlap and heavy censoring. These patterns are similar in  simulations with varying sample size, censoring rate and event rate, details of which are omitted here. The bootstrap and robust sandwich estimates of the variances are similar throughout the simulations, verifying the validity of the closed-form robust estimator. 

\subsection{Setting 2: Two-way factorial treatments}\label{sec:sim-factorial}
We next generated a two-way factorial design with two binary treatments $(Z_1,Z_2)$, inducing four treatment combinations $\bz\in\{(0,0),(1,0),(0,1),(1,1)\}$. The baseline covariates were generated in the same way as in the multiple treatment setting in Section~\ref{sec:Simu-multi}. The treatment assignment was generated from a multinomial logistic model of the propensity scores for the four treatment combinations $\bz\in\{(0,0),(1,0),(0,1),(1,1)\}$ as
\begin{equation} \label{eq:sim-factorial}
  e_{\bz}(X)=\Pr((Z_1,Z_2)=\bz\mid X)
=\frac{\exp\{\eta_{\bz}(X)\}}{\sum_{\bz^{'}}\exp\{\eta_{\bz^{'}}(X)\}},  
\end{equation}
where 
$\eta_{00}(X)\equiv 0$, $\eta_{10}(X)=\alpha_1+ \psi\, b^\top X$, $\eta_{01}(X)=\alpha_2- \psi\, b^\top X$, and $\eta_{11}(X)=\alpha_3+ \psi \, c^\top X$. 
The vector $b=(0.6, -0.4, 0.3, 0.2, -0.1, 0.15)^{'}$ and $c=(0.4,0.2,-0.3,0.1,0.1,-0.2)^{'}$ were fixed and normalized to unit length. Same as before, the scalar parameter $\psi$ controls the degree of covariate overlap between treatment groups. We considered $\psi\in\{1,2,3\}$, representing strong, moderate, and weak overlap, respectively.
For each $\psi$, we calibrated $(\alpha_1,\alpha_2,\alpha_3)$ to yield equal prevalences $(1/4,1/4,1/4,1/4)$ across the four treatment combinations. Distributions of the generalized propensity scores are displayed in the Supplementary Material. 

The potential event times $T(\bz)$ for $\bz=\{(0,0), (1,0),(0,1),(1,1)\}$ were generated from a conditional proportional hazards model,
\[
\lambda_{\bz}(t\mid X)=\lambda_0(t)\exp(\theta_{\bz}+\beta^\top X),
\]
with the same Weibull baseline hazard and covariate coefficient vector $\beta$ as in Section \ref{sec:Simu-multi}, and the conditional log hazard ratios were set to $(\theta_{00},\theta_{10}, \theta_{01}, \theta_{11})=(0,0.35,-0.20,0.15)$.
Censoring times and observed outcomes were generated similarly with $\lambda_c$ adjusted to achieve approximately $25\%$ or $50\%$ censoring.

As before, we numerically compute the value of the true estimands: the true log marginal hazard ratios for the total population between the $10, 01, 11$ group and the reference $00$ is $(\tau_1^{ipw},\tau_2^{ipw},\tau_3^{ipw})^{'}=(0.17, -0.10, 0.07)^{'}$, respectively, and the true log marginal hazard ratios for the overlap population  $(\tau_1^{ow},\tau_2^{ow},\tau_3^{ow})^{'}$ is $(0.20,-0.12,0.08)^{'}$ under $\psi=1$, 
$(0.24, -0.14, 0.10)^{'}$ under $\psi=2$,  and 
$(0.26, -0.16, 0.11)^{'}$ under $\psi=3$.
We used the same sample sizes, number of replicates, propensity score estimation model, weighting estimators (IPW and OW), and variance estimators (robust and bootstrap) as in the multi-treatment setting. We again compared against the naive and multivariable Cox regression models.

Table \ref{tab:results_factor_1000} summarizes the simulation results for the factorial design. 
The overall conclusions are consistent with those in the multiple treatment setting: the naive and multivariable Cox models exhibit substantial bias and under-coverage, whereas the weighted estimators reduce bias substantially. 
In particular, the OW estimator remains stable regardless of the degree of overlap, but the IPW estimator deteriorates as the degree of overlap decreases. 

\begin{table}[ht]
\centering
\caption{Results of the simulation with two-way factorial treatments described in Section \ref{sec:sim-factorial}. SE(ro) and SE(bs) are the standard errors of the log marginal hazard ratio obtained from the robust sandwich estimator and bootstrap, respectively.}
\label{tab:results_factor_1000}
\resizebox{\textwidth}{!}{%
\begin{tabular}{cccc*{3}{cccc}}
\toprule
Degree of  & Censor  & & \multicolumn{4}{c}{$\tau_1^w$} & \multicolumn{4}{c}{$\tau_2^w$} & \multicolumn{4}{c}{$\tau_3^w$} \\
\cmidrule(lr){4-7} \cmidrule(lr){8-11} \cmidrule(lr){12-15}
overlap   & Rate & Method 
& Rel.Bias & Coverage & SE(ro) & SE(bs) 
& Rel.Bias & Coverage & SE(ro) & SE(bs)
& Rel.Bias & Coverage & SE(ro) & SE(bs) \\
\midrule
\multirow{4}{*}{Strong, $\psi=1$}
& \multirow{4}{*}{$25\%$}

& IPW 
& 0.04 & 0.95 & 0.15 & 0.13
& 0.00 & 0.99 & 0.13 & 0.10
& 0.02 & 0.98 & 0.13 & 0.10 \\

& & OW 
& 0.03 & 0.96 & 0.13 & 0.11
& 0.00 & 0.98 & 0.11 & 0.09
& 0.01 & 0.97 & 0.11 & 0.09 \\

& & Naive 
& 0.82 & 0.00 & 0.10 & 0.10
& -0.42 & 0.00 & 0.11 & 0.11
& 0.15 & 0.74 & 0.10 & 0.10 \\

& & multivariable 
& 0.21 & 0.57 & 0.10 & 0.10
& -0.08 & 0.87 & 0.11 & 0.11
& 0.09 & 0.88 & 0.11 & 0.11 \\

\cmidrule(lr){1-15}
\multirow{4}{*}{Strong, $\psi=1$} 
& \multirow{4}{*}{$50\%$} 

&  IPW 
& 0.06 & 0.95 & 0.17 & 0.14
& -0.02 & 0.98 & 0.18 & 0.16
& 0.03 & 0.98 & 0.15 & 0.12 \\

& & OW 
& 0.05 & 0.95 & 0.15 & 0.13
& -0.02 & 0.97 & 0.15 & 0.14
& 0.03 & 0.96 & 0.14 & 0.13 \\

& & Naive 
& 0.93 & 0.00 & 0.12 & 0.12
& -0.48 & 0.01 & 0.15 & 0.15
& 0.18 & 0.76 & 0.13 & 0.13 \\

& & multivariable 
& 0.21 & 0.69 & 0.12 & 0.12
& -0.08 & 0.90 & 0.15 & 0.15
& 0.09 & 0.91 & 0.13 & 0.13 \\

\cmidrule(lr){1-15}
\multirow{4}{*}{Moderate, $\psi=2$} 
& \multirow{4}{*}{$25\%$} 

& IPW 
& 0.16 & 0.77 & 0.23 & 0.22
& -0.03 & 0.89 & 0.19 & 0.18
& 0.05 & 0.95 & 0.19 & 0.17 \\

& & OW 
& 0.05 & 0.94 & 0.17 & 0.17
& 0.00 & 0.97 & 0.14 & 0.13
& 0.02 & 0.97 & 0.13 & 0.13 \\

& & Naive 
& 1.53 & 0.00 & 0.10 & 0.10
& -0.57 & 0.00 & 0.12 & 0.12
& 0.19 & 0.62 & 0.10 & 0.10 \\

& & multivariable 
& 0.21 & 0.61 & 0.11 & 0.11
& -0.08 & 0.88 & 0.12 & 0.12
& 0.09 & 0.90 & 0.11 & 0.11 \\

\cmidrule(lr){1-15}
\multirow{4}{*}{Moderate, $\psi=2$} 
& \multirow{4}{*}{$50\%$} 

& IPW 
& 0.16 & 0.85 & 0.25 & 0.23
& -0.07 & 0.83 & 0.27 & 0.27
& 0.06 & 0.96 & 0.21 & 0.19 \\

& & OW 
& 0.07 & 0.93 & 0.19 & 0.19
& -0.02 & 0.96 & 0.20 & 0.20
& 0.03 & 0.96 & 0.17 & 0.17 \\

& & Naive 
& 1.70 & 0.00 & 0.12 & 0.12
& -0.62 & 0.00 & 0.17 & 0.17
& 0.24 & 0.64 & 0.13 & 0.13 \\

& & multivariable 
& 0.21 & 0.70 & 0.13 & 0.13
& -0.09 & 0.91 & 0.17 & 0.17
& 0.09 & 0.91 & 0.14 & 0.14 \\

\cmidrule(lr){1-15}
\multirow{4}{*}{Weak, $\psi=3$} 
& \multirow{4}{*}{$25\%$} 

& IPW 
& 0.36 & 0.65 & 0.29 & 0.29
& -0.14 & 0.74 & 0.23 & 0.23
& 0.10 & 0.89 & 0.25 & 0.24 \\

& & OW 
& 0.08 & 0.90 & 0.22 & 0.23
& 0.01 & 0.95 & 0.18 & 0.18
& 0.03 & 0.94 & 0.17 & 0.17 \\

& & Naive 
& 2.06 & 0.00 & 0.10 & 0.10
& -0.64 & 0.00 & 0.12 & 0.12
& 0.18 & 0.64 & 0.10 & 0.10 \\

& & multivariable 
& 0.21 & 0.64 & 0.11 & 0.11
& -0.08 & 0.90 & 0.13 & 0.13
& 0.09 & 0.90 & 0.12 & 0.12 \\

\cmidrule(lr){1-15}
\multirow{4}{*}{Weak, $\psi=3$} 
& \multirow{4}{*}{$50\%$} 

& IPW 
& 0.33 & 0.73 & 0.31 & 0.31
& -0.20 & 0.70 & 0.32 & 0.34
& 0.11 & 0.93 & 0.28 & 0.27 \\

& & OW 
& 0.10 & 0.91 & 0.25 & 0.26
& 0.00 & 0.94 & 0.27 & 0.28
& 0.04 & 0.95 & 0.22 & 0.22 \\

& & Naive 
& 2.28 & 0.00 & 0.12 & 0.12
& -0.68 & 0.00 & 0.17 & 0.17
& 0.25 & 0.60 & 0.13 & 0.13 \\

& & multivariable 
& 0.21 & 0.72 & 0.14 & 0.14
& -0.08 & 0.91 & 0.18 & 0.18
& 0.09 & 0.91 & 0.15 & 0.15 \\

\bottomrule
\end{tabular}%
}
\end{table}

\section{Application: Real-world effectiveness of GLP-1 on heart failure} \label{sec:application}
We apply the proposed method to a dataset constructed from the Truveta Database---a collective of de-identified electronic health record data from over 30 health systems in the United States---to evaluate the real-world effectiveness of glucagon-like peptide-1 receptor agonists (GLP-1) medications on heart failure endpoints in new users with heart failure with preserved ejection fraction (HFpEF). Our study cohort focuses on the population of new-users of anti-obesity medications from 2009 to 2024, comprising 96,657 patients. The primary outcome is the time to the first occurrence of a composite heart failure (HF) endpoint, defined as either HF hospitalization, urgent HF encounter, or cardiovascular death within 1 year. There are three treatments: GLP-1 (29,375 patients), naltrexone-bupropion (NB) (45,099 patients), and phentermine-topiramate (PT) (19,839 patients).  

Baseline covariates include demographic characteristics (age, sex, race and ethnicity), laboratory measurements (BMI, blood pressure, left ventricular ejection fraction (LVEF), HbA1c, cholesterol), and a wide range of comorbidities, including hypertension, renal disease, pulmonary disease, and mental health conditions. Table~\ref{tab:HFpEF-baseline} shows the summary statistics of a subset of important baseline characteristics by treatment and the corresponding standardized mean difference (SMD) of NB versus GLP-1 and PT versus GLP-1 before and after overlap weighting. Substantial imbalance---with the SMD being greater than 0.1---is observed in age, sex, race, BMI, lab measurements, and comorbidities between the three treatments. In particular, the GLP-1 group has a significantly higher average BMI (37.3) and a higher percentage of obesity (76\%) than the NB (31.9 and 38\%, respectively) and PT (34.5 and 53\%, respectively) groups. Overlap weighting improves balance in all covariates, reducing each corresponding SMDs to below 0.05. More details of the clinical background and study design can be found in Kalapura et al. (2026). \cite{kalapura2026real}  

\begin{sidewaystable}[p]
\centering
\caption{Summary of baseline characteristics of patients by treatment and pairwise standardized mean difference in the HFpEF study.}
\label{tab:HFpEF-baseline}
\begin{tabular}{lrrrrrrr}
\toprule
\multirow{2}{*}{\textbf{Characteristic}}
& \multirow{2}{*}{GLP-1 RA (n=29{,}375)}
& \multirow{2}{*}{NB (n=45{,}099)}
& \multirow{2}{*}{PT (n=19{,}839)}
& \multicolumn{2}{c}{SMD (GLP vs NB)}
& \multicolumn{2}{c}{SMD (GLP vs PT)} \\
\cmidrule(lr){5-6}\cmidrule(lr){7-8}
& & & & Unweighted & OW & Unweighted & OW \\
\midrule

\multicolumn{8}{l}{\textbf{Demographics}} \\
Age, Median (Q1, Q3)
& 58 (48, 67) & 60 (47, 70) & 52 (41, 63) & $-0.08$ & $-0.01$ & $0.36$ & $-0.03$ \\
Female, No. (\%)
& 18{,}759 (64) & 27{,}095 (60) & 15{,}206 (76) & $0.08$ & $0.02$ & $-0.26$ & $0.05$ \\
White, No. (\%)
& 22{,}509 (77) & 37{,}361 (83) & 14{,}539 (73) & $-0.16$ & $-0.02$ & $0.08$ & $-0.01$ \\
Black, No. (\%)
& 4{,}221 (14) & 4{,}075 (9) & 3{,}497 (18) & $0.17$ & $0.01$ & $-0.09$ & $-0.01$ \\

\midrule
\multicolumn{8}{l}{\textbf{Laboratory measures}} \\
LVEF, Median (Q1, Q3)
& 64 (60, 68) & 63 (58, 68) & 63 (59, 68) & $0.13$ & $0.00$ & $0.06$ & $0.01$ \\
HbA1c, Median (Q1, Q3)
& 5.6 (5.3, 5.9) & 5.5 (5.3, 5.8) & 5.5 (5.2, 5.8) & $0.20$ & $-0.01$ & $0.27$ & $0.00$ \\
HDL Cholesterol, Median (Q1, Q3)
& 50 (41, 61) & 50 (40, 62) & 49 (41, 60) & $0.00$ & $0.02$ & $0.05$ & $0.02$ \\
LDL Cholesterol, Median (Q1, Q3)
& 99 (75, 124) & 99 (75, 124) & 102 (80, 126) & $-0.01$ & $0.02$ & $-0.09$ & $0.02$ \\
White Blood Cell Count, Median (Q1, Q3)
& 7.0 (5.8, 8.6) & 7.4 (5.9, 9.2) & 7.3 (5.9, 9.0) & $-0.16$ & $0.01$ & $-0.12$ & $0.02$ \\
Hemoglobin, Median (Q1, Q3)
& 13.7 (12.8, 14.7) & 13.4 (12.2, 14.6) & 13.4 (12.3, 14.3) & $0.25$ & $-0.01$ & $0.28$ & $-0.01$ \\
Systolic Blood Pressure, Median (Q1, Q3)
& 156 (140, 173) & 158 (142, 177) & 155 (140, 174) & $-0.11$ & $-0.02$ & $0.00$ & $-0.01$ \\
Diastolic Blood Pressure, Median (Q1, Q3)
& 94 (86, 103) & 95 (86, 106) & 95 (86, 105) & $-0.10$ & $-0.02$ & $-0.07$ & $-0.01$ \\
BMI, Median (Q1, Q3)
& 37.3 (32.8, 43.2) & 31.9 (28.6, 37.0) & 34.5 (30.1, 40.5) & $0.65$ & $0.00$ & $0.30$ & $0.06$ \\

\midrule
\multicolumn{8}{l}{\textbf{Comorbidities}} \\
Cardiac Arrhythmias, No. (\%)
& 11{,}561 (39) & 17{,}597 (39) & 6{,}270 (32) & $0.01$ & $-0.01$ & $0.16$ & $-0.01$ \\
Valvular Disease, No. (\%)
& 13{,}368 (46) & 19{,}332 (43) & 6{,}217 (31) & $0.05$ & $0.02$ & $0.29$ & $0.01$ \\
Peripheral Vascular Disorders, No. (\%)
& 15{,}407 (52) & 22{,}817 (51) & 10{,}773 (54) & $0.04$ & $0.05$ & $-0.04$ & $0.00$ \\
Hypertension, Uncomplicated, No. (\%)
& 19{,}547 (67) & 26{,}247 (58) & 10{,}207 (51) & $0.17$ & $-0.03$ & $0.31$ & $-0.02$ \\
Other Neurological Disorders, No. (\%)
& 18{,}770 (64) & 29{,}618 (66) & 14{,}019 (71) & $-0.04$ & $-0.03$ & $-0.15$ & $0.00$ \\
Chronic Pulmonary Disease, No. (\%)
& 7{,}277 (25) & 13{,}673 (30) & 5{,}742 (29) & $-0.12$ & $0.00$ & $-0.09$ & $0.01$ \\
Obesity, No. (\%)
& 22{,}243 (76) & 16{,}927 (38) & 10{,}431 (53) & $0.84$ & $-0.01$ & $0.50$ & $-0.01$ \\
Fluid and Electrolyte Disorders, No. (\%)
& 21{,}425 (73) & 29{,}838 (66) & 11{,}575 (58) & $0.15$ & $-0.02$ & $0.31$ & $-0.01$ \\
Drug Abuse, No. (\%)
& 9{,}434 (32) & 21{,}127 (47) & 6{,}664 (33) & $-0.31$ & $-0.02$ & $-0.03$ & $-0.02$ \\
Depression, No. (\%)
& 5{,}351 (18) & 18{,}112 (40) & 5{,}099 (26) & $-0.50$ & $0.00$ & $-0.18$ & $0.02$ \\

\bottomrule
\end{tabular}
\end{sidewaystable}

We use a multinomial logistic model with a main effect of each baseline characteristic to estimate the generalized propensity scores for the three treatment groups, the distributions of which are shown in Figure~\ref{fig:HFpEF_ps_overlap}. The clear difference in the histograms of the propensities to each treatment summarizes the covariate imbalance between the treatments. There are a number of subjects with extreme propensities in each treatment. For example, in the GLP-1 group, there is one subject with a propensity to GLP-1 of $10^{-7}$, resulting in an astronomically large IPW, and there are two subjects with propensities to GLP-1 smaller than $0.001$, translating into IPWs over 1500. These few subjects, if included in the IPW analysis, would distort the analysis and lead to an estimated HR of over 50. Trimming subjects with propensities less than 0.05 leads to the excluding of 9372---9.9\% of the total of 94313---patients, consisting of 702 GLP-1, 7110 NB, and 1560 PT patients. We fitted the weighted Cox model with OW, IPW with and without trimming.
\begin{figure}
    \centering
    \includegraphics[width=0.8\linewidth]{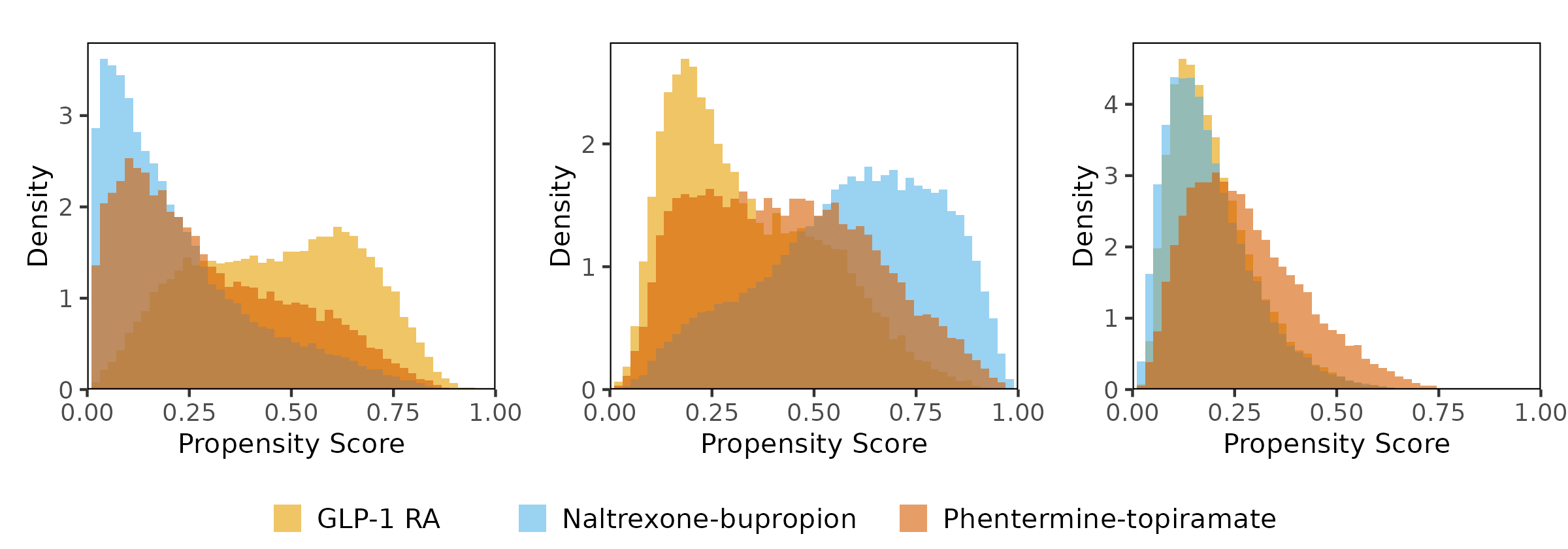}
    \caption{Distributions of the estimated  generalized propensity scores in the HFpEF application.}
    \label{fig:HFpEF_ps_overlap}
\end{figure}

Figure \ref{fig:HFpEF-KM} shows the unadjusted and overlap weighted Kaplan-Meier curves by treatment group. In the unadjusted analysis, the survival curves of patients receiving GLP-1 and PT are similar over time, both of which are consistent above the curve of patients receiving NB. After applying OW, the Kaplan–Meier curves show a clearer and more stable separation between the three treatments: survival probabilities are the highest for the GLP-1 group in the overlap population, followed by the PT group, and the lowest for NB throughout the follow-up period.
\begin{figure}[ht]
    \centering
    \begin{minipage}[t]{0.45\linewidth}
        \centering
        \includegraphics[width=0.8\linewidth]{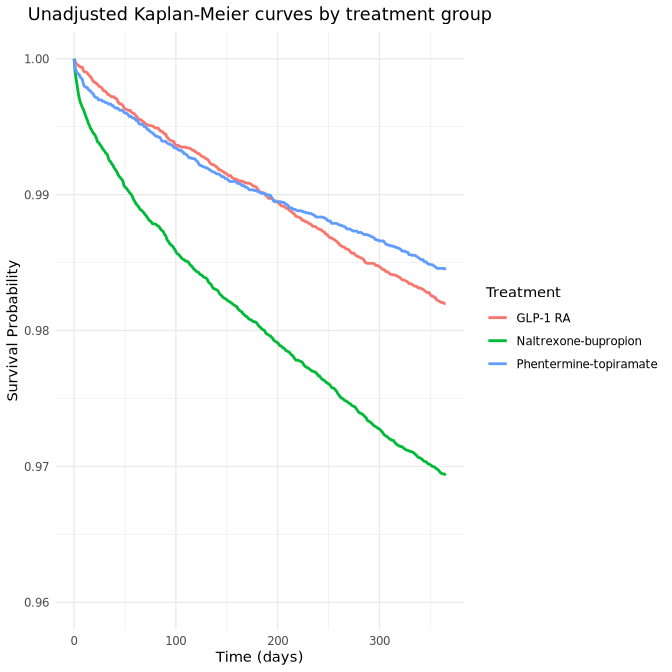}
    \end{minipage}
    \hfill
    \begin{minipage}[t]{0.45\linewidth}
        \centering
        \includegraphics[width=0.8\linewidth]{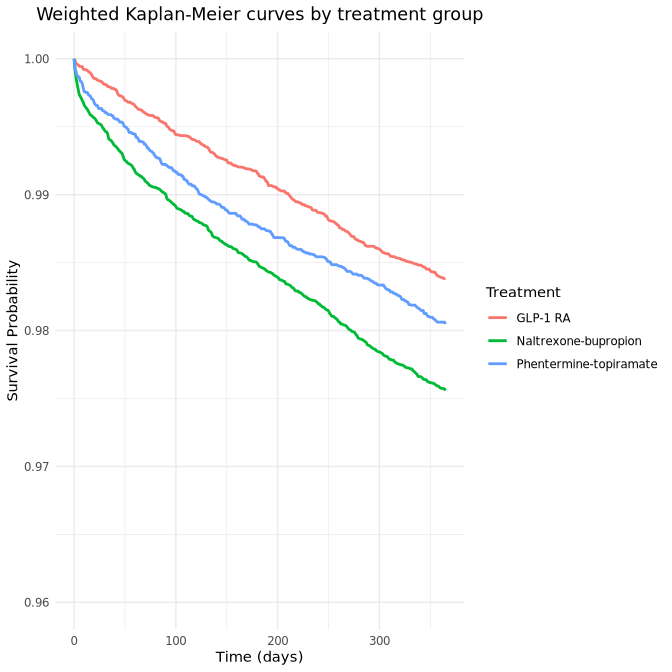}
    \end{minipage}
    \caption{Kaplan-Meier Curves by treatment of the HFpEF study: unadjusted (left) and overlap weighted (right).}
    \label{fig:HFpEF-KM}
\end{figure}

We compute the MHR estimates for the composite HF endpoint, with the GLP-1 group as the reference group. Using OW we find that both the PT and NB medications are associated with higher hazards relative to GLP-1, with estimated MHRs of 1.52 (95\% CI: 1.34–1.73) for NB and 1.21 (95\% CI: 1.03–1.42) for PT. These results are consistent with the order observed in the weighted Kaplan–Meier curves.  As expected, the few patients with extreme propensities lead the untrimmed IPW to produce extremely large and unstable MHR estimates with estimated MHRs of 54.82 (95\% CI: 8.11--370.38) for NB and 44.05  (95\% CI: 6.50--298.51) for PT.  
With trimming, the IPW estimates become more stable, with hazard ratios of 1.47 (95\% CI: 1.29–1.68) for NB and 1.26 (95\% CI: 1.06–1.50) for PT. These trimmed IPW estimates closely align with the OW estimates, but with wider confidence intervals due to the loss of patients. These results suggest a statistically significant clinical benefit of the GLP-1 medication on heart failure endpoints among HFpEF patients compared to other anti-obesity medications.

\section{Discussion}
This paper developed the propensity-score-weighted Cox regression for causal inference with survival outcomes in observational studies with multiple treatments. We provide the estimation and inference procedure for the causal marginal hazard ratio estimand. For the special case of observational factorial studies, our model differs from the conventional interacted Cox model, which was proposed for factorial randomized trials but is not applicable to observational studies because confounding renders units incomparable between treatment combinations. Our method treats the combined AB group as a separate treatment group, whose units are used only once to estimate the effects of the AB group instead of being used simultaneously for group A and B. Between the propensity score methods, we adopted the weighting approach because its extension to multiple treatments is easy to implement. In contrast, in binary treatments, a popular approach is to first obtain a sample via propensity score matching and then run a marginal Cox model on the matched sample. However, matching for more than two treatments \cite{lopez2017estimation} is challenging when the number of treatments increases. Moreover, poor overlap is common with more than two treatments, in which case matching can exclude a large portion of units, resulting in loss of power. The same problem applies to IPW with trimming, but is bypassed by OW. We assumed independent censoring; it is easy to extend to covariate or/and treatment dependent censoring: $C(j)\perp\{T(j),X\}\mid Z, X$ for all $j$. We can multiply the inverse probability of censoring weight with the propensity score (treatment) weight \cite{hernan2000marginal,cheng2021addressing} to create a combined weight, but the combined weight would be sensitive to extreme probabilities of treatment or censoring.

We acknowledge the ongoing debate on using the marginal hazard ratio (MHR) as a causal estimand. \cite{xu2000AverageHazard, hernan2010hazards, hernan2020causal, martinussen2022causality,fay2024causal} We justify its use by grounding its interpretation on counterfactual survival functions. Under the marginal proportional hazards assumption, the MHR equals the ratio of log survival functions at any time, providing a well-defined population-level causal contrast that compares cumulative survival experience between treatment arms. Even when the marginal proportional hazards assumption fails, our weighted Cox estimator remains consistent for some averaged hazard ratio in the target population, and this average effect retains a valid causal interpretation as a summary measure of counterfactual survivals over the study period. The MHR thus remains a practically useful estimand even when the marginal structural model is misspecified. Indeed, the hazard ratio is still arguably the most widely reported estimand in clinical research with survival outcomes. We encourage researchers to report alternative estimands, such as differences in counterfactual survival functions \cite{cheng2021addressing} or restricted mean survival time,\cite{mao2018propensity} when the proportional hazards assumption is suspected to be violated or when absolute effects are of primary interest. 


Weighted Cox models can also be used for covariate adjustment in randomized trials to increase efficiency by adjusting for chance imbalance, \cite{FDAcovadj2023} in which setting the choice of weights does not change the target population in randomized trials.\cite{zeng2021propensity} This is in contrast to covariate adjustment in observational studies, which serves to eliminate confounding bias. Despite the different goals and interpretation, the same analytical model and inference procedure applies to both randomized trials and observational studies. 

To bridge the proposed method to practice, we develop an associated R package \texttt{PSsurvival}\cite{PSsurvival} to implement general propensity score weighting analysis with time-to-event outcomes, available at CRAN \url{https://CRAN.R-project.org/package=PSsurvival}. Besides the proposed method for marginal hazard ratios, the package also supports estimation and inference of counterfactual survival functions, with IPW (with and without trimming), OW, ATT weighting for binary and multiple (including factorial) treatments. It supplies visualization functions to plot weighted Kaplan-Meier and cumulative risk curves. We provide an illustration example with publicly available data on a two-way factorial randomized trial \cite{lin2016simultaneous} and the R code in the Supplementary Material, which serves as a template for implementation to a multiple or factorial observational study using the \texttt{PSsurvival} package.

\section*{Acknowledgments}
This research is partially supported by a Duke University grant to the Observational Research Building Interdisciplinary Therapeutic Advances (ORBIT) Hub and a grant (25GLP1450119) from the American Heart Association (AHA). The content of this article is solely the responsibility of the authors and does not necessarily represent the view of Duke University or AHA. The authors appreciate the clinical input and motivating questions from Jay Lusk, Brian Mac Grory, computing assistance from Jiwon Shin, and constructive comments from Fan Li of Yale University and Per Johansson. 

\bibliographystyle{unsrtnat}
\bibliography{factorial}

\begin{thebibliography}{37}
\providecommand{\natexlab}[1]{#1}
\providecommand{\url}[1]{\texttt{#1}}
\expandafter\ifx\csname urlstyle\endcsname\relax
  \providecommand{\doi}[1]{doi: #1}\else
  \providecommand{\doi}{doi: \begingroup \urlstyle{rm}\Url}\fi

\bibitem[Cox(1972)]{cox1972regression}
David~R Cox.
\newblock Regression models and life-tables.
\newblock \emph{Journal of the Royal Statistical Society: Series B (Methodological)}, 34\penalty0 (2):\penalty0 187--202, 1972.

\bibitem[Rosenbaum and Rubin(1983)]{rosenbaum1983central}
P~R Rosenbaum and D~B Rubin.
\newblock The central role of the propensity score in observational studies for causal effects.
\newblock \emph{Biometrika}, 70\penalty0 (1):\penalty0 41--55, 1983.

\bibitem[Hern{\'a}n and Robins(2020)]{hernan2020causal}
Miguel~A Hern{\'a}n and James~M Robins.
\newblock \emph{Causal inference: What if}.
\newblock CRC Boca Raton, FL, 2020.

\bibitem[Austin and Stuart(2017)]{austin2017performance}
Peter~C Austin and Elizabeth~A Stuart.
\newblock The performance of inverse probability of treatment weighting and full matching on the propensity score in the presence of model misspecification when estimating the effect of treatment on survival outcomes.
\newblock \emph{Statistical methods in medical research}, 26\penalty0 (4):\penalty0 1654--1670, 2017.

\bibitem[Feng et~al.(2012)Feng, Zhou, Zou, Fan, and Li]{Feng2012}
Ping Feng, Xiao~Hua Zhou, Qing~Ming Zou, Ming~Yu Fan, and Xiao~Song Li.
\newblock {Generalized propensity score for estimating the average treatment effect of multiple treatments}.
\newblock \emph{Statistics in Medicine}, 31\penalty0 (7):\penalty0 681--697, 2012.

\bibitem[McCaffrey et~al.(2013)McCaffrey, Griffin, Almirall, Slaughter, Ramchand, and Burgette]{McCaffrey2013}
Daniel~F. McCaffrey, Beth~Ann Griffin, Daniel Almirall, Mary~Ellen Slaughter, Rajeev Ramchand, and Lane~F. Burgette.
\newblock {A tutorial on propensity score estimation for multiple treatments using generalized boosted models}.
\newblock \emph{Statistics in Medicine}, 32\penalty0 (19):\penalty0 3388--3414, 2013.

\bibitem[Lopez and Gutman(2017)]{lopez2017estimation}
Michael~J Lopez and Roee Gutman.
\newblock Estimation of causal effects with multiple treatments: a review and new ideas.
\newblock \emph{Statistical Science}, pages 432--454, 2017.

\bibitem[Li and Li(2019)]{li2019propensity}
Fan Li and Fan Li.
\newblock Propensity score weighting for causal inference with multiple treatments.
\newblock \emph{Annals of Applied Statistics}, 13\penalty0 (4):\penalty0 2389--2415, 2019.

\bibitem[Zhou et~al.(2025)Zhou, Tong, Li, Thomas, Li, and Zeng]{PSweight}
Tianhui Zhou, Guangyu Tong, Fan Li, Laine Thomas, Fan Li, and Yukang Zeng.
\newblock \emph{PSweight: Propensity Score Weighting for Causal Inference with Observational Studies and Randomized Trials}, 2025.
\newblock URL \url{https://cran.r-project.org/web/packages/PSweight/}.
\newblock R package version 2.1.2.

\bibitem[Cefalu et~al.(2025)Cefalu, Ridgeway, McCaffrey, Morral, Griffin, and Burgette]{twang}
Matthew Cefalu, Greg Ridgeway, Dan McCaffrey, Andrew Morral, Beth~Ann Griffin, and Lane Burgette.
\newblock \emph{twang: Toolkit for Weighting and Analysis of Nonequivalent Groups}, 2025.
\newblock URL \url{https://cran.r-project.org/web/packages/twang/}.
\newblock R package version 2.6.2.

\bibitem[Simms-Williams et~al.(2024)Simms-Williams, Treves, Yin, Lu, Yu, Pradhan, Renoux, Suissa, and Azoulay]{simms2024effect}
Nikita Simms-Williams, Nir Treves, Hui Yin, Sally Lu, Oriana Yu, Richeek Pradhan, Christel Renoux, Samy Suissa, and Laurent Azoulay.
\newblock Effect of combination treatment with glucagon-like peptide-1 receptor agonists and sodium-glucose cotransporter-2 inhibitors on incidence of cardiovascular and serious renal events: population based cohort study.
\newblock \emph{bmj}, 385, 2024.

\bibitem[Kahan et~al.(2023)Kahan, Hall, Beller, Birchenall, Chan, Elbourne, Little, Fletcher, Golub, Goulao, et~al.]{kahan2023reporting}
Brennan~C Kahan, Sophie~S Hall, Elaine~M Beller, Megan Birchenall, An-Wen Chan, Diana Elbourne, Paul Little, John Fletcher, Robert~M Golub, Beatriz Goulao, et~al.
\newblock Reporting of factorial randomized trials: extension of the consort 2010 statement.
\newblock \emph{Jama}, 330\penalty0 (21):\penalty0 2106--2114, 2023.

\bibitem[Slud(1994)]{slud1994analysis}
Eric~V Slud.
\newblock Analysis of factorial survival experiments.
\newblock \emph{Biometrics}, pages 25--38, 1994.

\bibitem[VanderWeele(2011)]{vanderweele2011causal}
Tyler~J VanderWeele.
\newblock Causal interactions in the proportional hazards model.
\newblock \emph{Epidemiology}, 22\penalty0 (5):\penalty0 713--717, 2011.

\bibitem[Hern{\'a}n(2010)]{hernan2010hazards}
Miguel~A Hern{\'a}n.
\newblock The hazards of hazard ratios.
\newblock \emph{Epidemiology}, 21\penalty0 (1):\penalty0 13, 2010.

\bibitem[Yu and Ding(2025)]{yu2025balancing}
Ruoqi Yu and Peng Ding.
\newblock Balancing weights for causal inference in observational factorial studies.
\newblock \emph{Journal of the American Statistical Association}, 2025.

\bibitem[Rosenbaum(1987)]{rosenbaum1987model}
Paul~R Rosenbaum.
\newblock Model-based direct adjustment.
\newblock \emph{Journal of the American statistical Association}, 82\penalty0 (398):\penalty0 387--394, 1987.

\bibitem[Li et~al.(2018)Li, Morgan, and Zaslavsky]{li2018balancing}
F.~Li, K.~L. Morgan, and A.~M. Zaslavsky.
\newblock Balancing covariates via propensity score weighting.
\newblock \emph{Journal of the American Statistical Association}, 113\penalty0 (521):\penalty0 390--400, 2018.
\newblock ISSN 0162-1459.
\newblock \doi{10.1080/01621459.2016.1260466}.
\newblock URL \url{<Go to ISI>://WOS:000438960500039}.

\bibitem[Yang et~al.(2026)Yang, Li, Cheng, and Li]{PSsurvival}
Chengxin Yang, Fan Li, Chao Cheng, and Fan Li.
\newblock \emph{PSsurvival:Propensity Score Methods for Survival Analysis}, 2026.
\newblock URL \url{https://cran.r-project.org/web/packages/PSsurvival/}.
\newblock R package version 0.2.0.

\bibitem[Imbens(2000)]{imbens2000role}
Guido~W Imbens.
\newblock The role of the propensity score in estimating dose-response functions.
\newblock \emph{Biometrika}, 87\penalty0 (3):\penalty0 706--710, 2000.

\bibitem[Mao et~al.(2018)Mao, Li, Yang, and Shen]{mao2018propensity}
Huzhang Mao, Liang Li, Wei Yang, and Yu~Shen.
\newblock On the propensity score weighting analysis with survival outcome: Estimands, estimation, and inference.
\newblock \emph{Statistics in medicine}, 37\penalty0 (26):\penalty0 3745--3763, 2018.

\bibitem[Hern{\'a}n et~al.(2000)Hern{\'a}n, Brumback, and Robins]{hernan2000marginal}
Miguel~{\'A}ngel Hern{\'a}n, Babette Brumback, and James~M Robins.
\newblock Marginal structural models to estimate the causal effect of zidovudine on the survival of hiv-positive men.
\newblock \emph{Epidemiology}, 11\penalty0 (5):\penalty0 561--570, 2000.

\bibitem[Fay and Li(2024)]{fay2024causal}
Michael~P Fay and Fan Li.
\newblock Causal interpretation of the hazard ratio in randomized clinical trials.
\newblock \emph{Clinical Trials}, 21\penalty0 (5):\penalty0 623--635, 2024.

\bibitem[Xu and O’Quigley(2000)]{xu2000AverageHazard}
Ronghui Xu and John O’Quigley.
\newblock Estimating average regression effect under non-proportional hazards.
\newblock \emph{Biostatistics}, 1\penalty0 (4):\penalty0 423--439, 12 2000.
\newblock ISSN 1465-4644.

\bibitem[Cox(1975)]{cox1975partial}
David~R Cox.
\newblock Partial likelihood.
\newblock \emph{Biometrika}, 62\penalty0 (2):\penalty0 269--276, 1975.

\bibitem[Lin and Wei(1989)]{lin1989robust}
D.~Y. Lin and L.~J. Wei.
\newblock The robust inference for the cox proportional hazards model.
\newblock \emph{Journal of the American Statistical Association}, 84\penalty0 (408):\penalty0 1074--1078, 1989.

\bibitem[Gabriel et~al.(2024)Gabriel, Sachs, Waernbaum, Goetghebeur, Blanche, Vansteelandt, Sj{\"o}lander, and Scheike]{gabriel2024propensity}
Erin~E Gabriel, Michael~C Sachs, Ingeborg Waernbaum, Els Goetghebeur, Paul~F Blanche, Stijn Vansteelandt, Arvid Sj{\"o}lander, and Thomas Scheike.
\newblock Propensity weighting plus adjustment in proportional hazards model is not doubly robust.
\newblock \emph{Biometrics}, 80\penalty0 (3):\penalty0 ujae069, 2024.

\bibitem[Lin et~al.(2016)Lin, Gong, Gallo, Bunn, and Couper]{lin2016simultaneous}
Dan-Yu Lin, Jianjian Gong, Paul Gallo, Paul~H Bunn, and David Couper.
\newblock Simultaneous inference on treatment effects in survival studies with factorial designs.
\newblock \emph{Biometrics}, 72\penalty0 (4):\penalty0 1078--1085, 2016.

\bibitem[Lunceford and Davidian(2004)]{lunceford04}
Jared~K Lunceford and Marie Davidian.
\newblock Stratification and weighting via the propensity score in estimation of causal treatment effects: a comparative study.
\newblock \emph{Statistics in medicine}, 23\penalty0 (19):\penalty0 2937--2960, 2004.
\newblock ISSN 0277-6715.
\newblock URL \url{https://www.onlinelibrary.wiley.com/doi/pdf/10.1002/sim.1903}.

\bibitem[Binder(1992)]{binder1992fitting}
David~A Binder.
\newblock Fitting cox's proportional hazards models from survey data.
\newblock \emph{Biometrika}, 79\penalty0 (1):\penalty0 139--147, 1992.

\bibitem[Shu et~al.(2021)Shu, Young, Toh, and Wang]{shu2021variance}
Di~Shu, Jessica~G Young, Sengwee Toh, and Rui Wang.
\newblock Variance estimation in inverse probability weighted cox models.
\newblock \emph{Biometrics}, 77\penalty0 (3):\penalty0 1101--1117, 2021.

\bibitem[Austin and Schuster(2016)]{austin2016performance}
Peter~C Austin and Tibor Schuster.
\newblock The performance of different propensity score methods for estimating absolute effects of treatments on survival outcomes: a simulation study.
\newblock \emph{Statistical Methods in Medical Research}, 25\penalty0 (5):\penalty0 2214--2237, 2016.

\bibitem[Cheng et~al.(2021)Cheng, Li, Thomas, and Li]{cheng2021addressing}
Chao Cheng, Fan Li, Laine Thomas, and Fan Li.
\newblock Addressing extreme propensity scores in estimating counterfactual survival functions via the overlap weights.
\newblock \emph{American Journal of Epidemiology}, 2021.

\bibitem[Kalapura et~al.(2025)Kalapura, Shin, Mentz, Greene, Grory, Li, Lusk, and O’Brien]{kalapura2026real}
Cheryl Kalapura, Jiwon Shin, Robert~J. Mentz, Stephen~J. Greene, Brian~Mac Grory, Fan Li, Jay~B. Lusk, and Emily~C. O’Brien.
\newblock Real-world effectiveness of glp-1 receptor agonists on clinical outcomes in patients with heart failure with preserved ejection fraction.
\newblock \emph{Circulation}, 152\penalty0 (Suppl 3):\penalty0 A4370283--A4370283, 2025.

\bibitem[Martinussen(2022)]{martinussen2022causality}
Torben Martinussen.
\newblock Causality and the cox regression model.
\newblock \emph{Annual Review of Statistics and Its Application}, 9\penalty0 (1):\penalty0 249--259, 2022.

\bibitem[{U.S. Food and Drug Administration}(2023)]{FDAcovadj2023}
{U.S. Food and Drug Administration}.
\newblock Adjusting for covariates in randomized clinical trials for drugs and biological products.
\newblock GUIDANCE DOCUMENT FDA-2019-D-0934, {U.S. Food and Drug Administration}, Silver Spring, MD, May 2023.

\bibitem[Zeng et~al.(2021)Zeng, Li, Wang, and Li]{zeng2021propensity}
Shuxi Zeng, Fan Li, Rui Wang, and Fan Li.
\newblock Propensity score weighting for covariate adjustment in randomized clinical trials.
\newblock \emph{Statistics in medicine}, 40\penalty0 (4):\penalty0 842--858, 2021.

\end{thebibliography}

\appendix
\section{Proof of Theorem 1}
\subsection{Setup and notations}
Let $X$ denote baseline covariates with density function $f(X)$. 
Then
the density function for the target population can be represented as $f(X)h(X)/K$, where $h(X)$ is a pre-specified tilting function and $K$ is a normalization term to make the density integrate to 1.

Define the conditional hazard and survival function of $T(j)$:
\begin{align*}
    \lambda_j(t\mid X)=\lim_{\Delta t\downarrow 0}
\frac{\Pr\{t<T(j)\le t+\Delta t\mid T(j)\ge t, X\}}{\Delta t}, \quad
S_j(t\mid X)=\Pr\{T(j)\ge t\mid X\}.
\end{align*}
Define the target marginal hazard function by
\[
\lambda_j^{\text{target}}(t)
=
\frac{E\!\left[h(X)\lambda_j(t\mid X)\,S_j(t\mid X)\right]}{E\!\left[h(X)S_j(t\mid X)\right]}.
\]
We assume the marginal Cox model in the target population:
\begin{equation}
\lambda_j^{\text{target}}(t)=\lambda_0^{\text{target}}(t)\exp(\tau_j^w),
\qquad \tau_0^w\equiv 0.
\label{eq:target_cox_app}
\end{equation}


We maintain the following assumptions: (A1) Weak unconfoundedness: $T(j)\perp \one\{Z=j\}\mid X$ for all $j\in\{0,\ldots,J\}$. (A2) Overlap: $e_j(X)>0$ for all $X$ and $j$.
(A3) Independent censoring: $C(j)\perp\{T(j),X\}\mid Z=j$ for all $j$. (A4) Correct specification and consistency of the propensity score: there exists $\gamma^*$ such that $e_j(X;\gamma^*)=\Pr(Z=j\mid X)$ for all $j\in\{0, \cdots ,J\}$, and $ \hat{\gamma}\xrightarrow{p} \gamma^*$. (A5) The censoring time is bounded by a finite constant $T_0<\infty$.

The true balancing weight is $w=h(X)/e_Z(X;\gamma^*)=h(X)/e_Z(X)$. Throughout the consistency analysis, we first establish the large-sample
properties of the estimator under the true weight $w_i=h(X_i)/e_j(X_i;\gamma^*)$ for unit $i$ and treatment $j$, and then show that replacing $w_i$ by $\hat w_i=h(X_i)/e_j(X_i;\hat\gamma)$ does not affect the consistency of $\hat{\tau}$ under mild regularity conditions.

Let $\hat\tau (\gamma^*)$ denote the solution to the estimating equation constructed with the true weights 
\begin{equation}\label{eq:cox_wi}
    \sum_{i=1}^n w_i \delta_i \left\{\one\{Z_i=j\} - \frac{\sum_{l \in \mathcal{R}_i} w_l \exp(\sum_{k=1}^J \one\{Z_l=k\} \tau_k)\one\{Z_l=j\}}{\sum_{l \in \mathcal{R}_i} w_l \exp(\sum_{k=1}^J \one\{Z_l=k\} \tau_k)}\right\}=0 ,\quad j=1,\cdots, J
\end{equation}
and $\hat{\tau}=\hat{\tau}(\hat{\gamma})$ with the estimated weights
\begin{equation*}
    \sum_{i=1}^n \widehat{w}_i \delta_i \left\{\one\{Z_i=j\} - \frac{\sum_{l \in \mathcal{R}_i} \widehat{w}_l \exp(\sum_{k=1}^J \one\{Z_l=k\} \tau_k)\one\{Z_l=j\}}{\sum_{l \in \mathcal{R}_i} \widehat{w}_l \exp(\sum_{k=1}^J \one\{Z_l=k\} \tau_k)}\right\}=0. \quad j=1,\cdots, J
\end{equation*}

The proof proceeds in three steps.
First, we derive key weighted population moment identities.
Second, we show that the population score vanishes at the target parameter $\tau^w$, which establishes consistency.
Third, we derive an asymptotic linear representation of the estimator and obtain the robust sandwich variance estimator.

\subsection{Key identities for weighted population moments}
By Assumption (A3), for each $j$ and $t\in[0,T_0]$,
\[
\Pr\{C(j)\ge t\mid X,Z=j\}=\Pr\{C(j)\ge t\mid Z=j\}=:G_j(t).
\]

For $t\in[0,T_0]$ and each $j$, define
\[
R_j(t\mid X):=
E\!\left[\one\{T(j)\ge t\}\one\{C(j)\ge T(j)\}\mid X,Z=j\right].
\]
In particular $R_j(0\mid X)=E[\one\{C(j)\ge T(j)\}\mid X,Z=j]$.

\begin{lemma}[Integral representation of $R_j(t\mid X)$]
\label{lem:R_integral}
Under (A1) and (A3),
\[
R_j(t\mid X)=\int_t^{T_0}\lambda_j(s\mid X)S_j(s\mid X)\,G_j(s)\,ds.
\]
\end{lemma}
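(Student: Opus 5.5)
We compute $R_j(t\mid X)$ by conditioning on the failure time $T(j)$ and integrating against its conditional density. The key reduction is to strip the conditioning on $Z=j$ from the law of $T(j)$ and from the law of $C(j)$, and to factor their joint law.

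\textbf{Step 1: conditional density of $T(j)$.} By (A1), given $X$, $T(j)$ is independent of $\one\{Z=j\}$. Hence, conditionally on $(X,Z=j)$, $T(j)$ has survival function $S_j(s\mid X)$. Since $-dS_j(s\mid X)=\lambda_j(s\mid X)S_j(s\mid X)\,ds$, its density is $\lambda_j(s\mid X)S_j(s\mid X)$.

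\textbf{Step 2: independence of $C(j)$ and $T(j)$.} Assumption (A3) gives $C(j)\perp\{T(j),X\}\mid Z=j$. By the weak union property of conditional independence, $C(j)\perp T(j)\mid X,Z=j$. In addition, $\Pr\{C(j)\ge s\mid X,Z=j\}=G_j(s)$, and this holds regardless of $X$. Therefore, conditionally on $(X,Z=j)$, the pair $(T(j),C(j))$ has a product law with marginals $\lambda_jS_j\,ds$ and $G_j$.

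\textbf{Step 3: iterated expectation.} Condition on $T(j)=s$ and use Step 2:
\[
E\!\left[\one\{T(j)\ge t\}\one\{C(j)\ge T(j)\}\mid X,Z=j\right]
=\int_t^{\infty}\Pr\{C(j)\ge s\mid X,Z=j\}\,\lambda_j(s\mid X)S_j(s\mid X)\,ds .
\]
The conditional probability inside the integral equals $G_j(s)$ by Step 2, so the integrand is $G_j(s)\lambda_j(s\mid X)S_j(s\mid X)$. By (A5), $C(j)\le T_0$, so $G_j(s)=0$ for $s>T_0$, and the upper limit can be replaced by $T_0$.

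\textbf{Main obstacle.} The only delicate point is Step 2, that is, justifying that conditioning on $Z=j$ does not distort the law of $T(j)$ and that $C(j)$ factors out. The main work is the bookkeeping of which conditional independencies are used. Two further details need care:
\begin{itemize}
\item Absolute continuity of $T(j)$, which is implicit in defining the hazard $\lambda_j$. I will state it as part of the regularity conditions.
\item Ties at $s=T_0$ and the convention $S_j(t)=\Pr\{T(j)\ge t\}$. These do not affect the integral when $T(j)$ is continuous.
\end{itemize}
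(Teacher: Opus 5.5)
Your proof is correct and follows essentially the same route as the paper. You condition on $T(j)=s$, use (A3) to replace the conditional censoring probability by $G_j(s)$, use (A1) to drop $Z=j$ from the law of $T(j)$, and apply $f_{T(j)\mid X}=\lambda_j S_j$. Your explicit appeals to weak union (for $C(j)\perp T(j)\mid X,Z=j$) and to (A5) (to truncate the integral at $T_0$) make precise steps the paper uses silently, since its lemma cites only (A1) and (A3).
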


\begin{proof}
Fix $j$ and condition on $X$ and $Z=j$.
By the law of total probability and (A3),
\begin{align*}
R_j(t\mid X)
&=E\!\left[\one\{T(j)\ge t\}\one\{C(j)\ge T(j)\}\mid X,Z=j\right]\\
&=\int_t^{T_0}\Pr\{C(j)\ge s\mid X,Z=j,T(j)=s\}\,f_{T(j)\mid X,Z=j}(s)\,ds\\
&=\int_t^{T_0}\Pr\{C(j)\ge s\mid Z=j\}\,f_{T(j)\mid X}(s)\,ds\\
&=\int_t^{T_0}G_j(s)\,\lambda_j(s\mid X)S_j(s\mid X)\,ds,
\end{align*}
where we used (A1) to replace $f_{T(j)\mid X,Z=j}$ by $f_{T(j)\mid X}$ and the identity
$f_{T(j)\mid X}(s)=\lambda_j(s\mid X)S_j(s\mid X)$.
\end{proof}

\begin{lemma}[Weighted population moments]
\label{lem:weighted_moments}
Let $u:\{0,\ldots,J\}\to\mathbb{R}$ be any function.
Under (A1)--(A3), for all $t\in[0,T_0]$ and $j\in\{0,\ldots,J\}$,
\begin{align}
E\!\left[w\,\delta\,\one\{Z=j\}\right]
&=E\!\left[h(X)\,R_j(0\mid X)\right],
\label{eq:moment1}\\[3pt]
E\!\left[w\,u(Z)\,\one\{Y\ge t\}\right]
&=\sum_{j=0}^J u(j)\,E\!\left[h(X)\,S_j(t\mid X)\,G_j(t)\right],
\label{eq:moment2}\\[3pt]
E\!\left[w\,\delta\,\one\{Y\ge t\}\right]
&=\sum_{j=0}^J E\!\left[h(X)\,R_j(t\mid X)\right].
\label{eq:moment3}
\end{align}
\end{lemma}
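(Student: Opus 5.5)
The plan is to prove all three identities by the same two-stage conditioning: first on $(X, Z=j)$, then on $X$. The weight $w=h(X)/e_Z(X)$ will cancel the propensity score at the second stage.

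\emph{Step 1: decompose by treatment.} On $\{Z=j\}$, consistency (SUTVA) gives $Y=\min\{T(j),C(j)\}$ and $\delta=\one\{T(j)\le C(j)\}$. Hence, for any integrable $g$,
\[
E[w\,g(Y,\delta)\one\{Z=j\}]
=E\!\left[\frac{h(X)}{e_j(X)}\,e_j(X)\,E\{g(Y,\delta)\mid X,Z=j\}\right]
=E\!\left[h(X)\,E\{g(\min(T(j),C(j)),\one\{T(j)\le C(j)\})\mid X,Z=j\}\right].
\]
Overlap (A2) keeps the division by $e_j(X)$ well defined. Summing over $j$, with the factor $u(j)$ attached where needed, reduces everything to computing conditional expectations given $(X,Z=j)$.

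\emph{Step 2: the first and third identities.} For \eqref{eq:moment1}, take $g=\delta$. The inner expectation is $E[\one\{C(j)\ge T(j)\}\mid X,Z=j]=R_j(0\mid X)$ by definition, since $T(j)\ge 0$. For \eqref{eq:moment3}, note that on $\{\delta=1\}$ we have $Y=T(j)$. So $\delta\,\one\{Y\ge t\}=\one\{T(j)\ge t\}\one\{C(j)\ge T(j)\}$, whose conditional expectation is exactly $R_j(t\mid X)$. Summing over $j$ gives the claim. Lemma~\ref{lem:R_integral} is not needed here; it only supplies the integral form used later.

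\emph{Step 3: the second identity.} Here $\one\{Y\ge t\}=\one\{T(j)\ge t\}\one\{C(j)\ge t\}$ on $\{Z=j\}$. I need the factorization
\[
E[\one\{T(j)\ge t\}\one\{C(j)\ge t\}\mid X,Z=j]=\Pr\{T(j)\ge t\mid X,Z=j\}\,G_j(t).
\]
This follows from (A3), which makes $C(j)$ independent of $(T(j),X)$ given $Z=j$, so conditioning on $X$ leaves $C(j)$'s distribution equal to $G_j$. Then (A1), weak unconfoundedness, replaces $\Pr\{T(j)\ge t\mid X,Z=j\}$ by $S_j(t\mid X)$. Multiplying by $u(j)$ and summing gives \eqref{eq:moment2}.

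\emph{Main obstacle.} The main care point is this factorization step. One must read (A3) as joint independence conditional on $Z=j$, namely $C(j)\perp (T(j),X)\mid Z=j$, so that it yields independence of $C(j)$ and $T(j)$ conditional on both $X$ and $Z=j$. Independence from each separately would not suffice. One must also check that (A1) is applied only to functionals of $T(j)$ alone.

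A minor point is the tie convention $\{C\ge T\}$ versus $\{T\le C\}$; these agree. Boundedness from (A5) guarantees all the expectations are finite once $h$ is bounded.
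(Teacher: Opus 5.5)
Your proposal is correct and follows essentially the same route as the paper: condition on $X$ and $Z=j$, cancel $e_j(X)$ against the weight $h(X)/e_j(X)$, and use consistency to rewrite $\delta$ and $\one\{Y\ge t\}$ on $\{Z=j\}$ in terms of $T(j)$ and $C(j)$, so that the first and third identities follow directly from the definition of $R_j$; for the second identity you factor the joint survival probability using (A3) and then (A1), exactly as the paper does. One small remark: finiteness of these expectations comes from the propensity cancellation, which leaves integrands bounded by $|h(X)|$, so (A5) plays no role there.
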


\begin{proof}
\emph{Proof of \eqref{eq:moment1}.}
Using the law of total expectation,
\begin{align*}
E\!\left[w\,\delta\,\one\{Z=j\}\right]
&=E\!\left[E\!\left\{w\,\delta\,\one\{Z=j\}\mid X\right\}\right]\\
&=E\!\left[\frac{h(X)}{e_j(X)}\,E\!\left\{\delta\,\one\{Z=j\}\mid X\right\}\right]\\
&=E\!\left[\frac{h(X)}{e_j(X)}\,\Pr(Z=j\mid X)\,E\!\left\{\delta\mid X,Z=j\right\}\right]\\
&=E\!\left[h(X)\,E\!\left\{\delta\mid X,Z=j\right\}\right].
\end{align*}
By consistency, on $\{Z=j\}$ we have $\delta=\one\{C(j)\ge T(j)\}$, hence
\[
E\!\left\{\delta\mid X,Z=j\right\}
=
E\!\left[\one\{C(j)\ge T(j)\}\mid X,Z=j\right]
=R_j(0\mid X),
\]
which gives \eqref{eq:moment1}.

\medskip
\emph{Proof of \eqref{eq:moment2}.}
Using $\sum_{j=0}^J\one\{Z=j\}=1$ and total expectation,
\begin{align*}
E\!\left[w\,u(Z)\,\one\{Y\ge t\}\right]
&=E\!\left[E\!\left\{w\,u(Z)\,\one\{Y\ge t\}\mid X\right\}\right]\\
&=E\!\left[\sum_{j=0}^J \frac{h(X)}{e_j(X)}u(j)\,
E\!\left\{\one\{Z=j\}\one\{Y\ge t\}\mid X\right\}\right]\\
&=E\!\left[\sum_{j=0}^J \frac{h(X)}{e_j(X)}u(j)\,
\Pr(Z=j\mid X)\,E\!\left\{\one\{Y\ge t\}\mid X,Z=j\right\}\right]\\
&=\sum_{j=0}^J u(j)\,E\!\left[h(X)\,E\!\left\{\one\{Y\ge t\}\mid X,Z=j\right\}\right].
\end{align*}
By consistency, $\one\{Y\ge t\}=\one\{T(j)\ge t\}\one\{C(j)\ge t\}$ on $\{Z=j\}$, and by (A3),
\begin{align*}
E\!\left\{\one\{Y\ge t\}\mid X,Z=j\right\}
&=E\!\left[\one\{T(j)\ge t\}\one\{C(j)\ge t\}\mid X,Z=j\right]\\
&=\Pr\{T(j)\ge t\mid X\}\Pr\{C(j)\ge t\mid Z=j\}\\
&=S_j(t\mid X)\,G_j(t).
\end{align*}
Substituting yields \eqref{eq:moment2}.

\medskip
\emph{Proof of \eqref{eq:moment3}.}
Similarly,
\begin{align*}
E\!\left[w\,\delta\,\one\{Y\ge t\}\right]
&=E\!\left[E\!\left\{w\,\delta\,\one\{Y\ge t\}\mid X\right\}\right]\\
&=E\!\left[\sum_{j=0}^J \frac{h(X)}{e_j(X)}\,
E\!\left\{\delta\,\one\{Y\ge t\}\one\{Z=j\}\mid X\right\}\right]\\
&=\sum_{j=0}^J E\!\left[h(X)\,E\!\left\{\delta\,\one\{Y\ge t\}\mid X,Z=j\right\}\right].
\end{align*}
and
\[
E\!\left\{\delta\,\one\{Y\ge t\}\mid X,Z=j\right\}
=
E\!\left[\one\{T(j)\ge t\}\one\{C(j)\ge T(j)\}\mid X,Z=j\right]
=R_j(t\mid X).
\]
Substituting gives \eqref{eq:moment3}.
\end{proof}

\subsection{Unbiasedness of the population score at $\tau^w$}

Define the weighted
partial log-likelihood of the Cox model constructed with the true weights by
\begin{align}
\mathcal{G}_n(\tau)
&=\frac{1}{n}\sum_{i=1}^{n} w_i \delta_i\Big(\sum_{j=1}^J \one\{Z_i=j\}\tau_j\Big)
-\frac{1}{n}\sum_{i=1}^{n} w_i \delta_i
\log\Big(\frac{1}{n}\sum_{l\in \mathcal{R}_i} w_l\exp\{\sum_{j=1}^J \one\{Z_l=j\}\tau_j\}\Big),
\label{eq:Fn_def}
\end{align}
where $\mathcal{R}_i=\{l:\,Y_l\ge Y_i\}$ denotes the risk set at time $Y_i$.

Rewriting the sums over event times as Stieltjes integrals,
\eqref{eq:Fn_def} can be equivalently written as
\begin{align*}
\mathcal{G}_n(\tau)
&=\sum_{j=1}^J\tau_j\cdot \frac{1}{n}\sum_{i=1}^{n} w_i \delta_i \one\{Z_i=j\}-\int_0^{T_0} -d\Big(\frac{1}{n}\sum_{i=1}^{n} w_i \delta_i \one\{Y_i\ge s\}\Big)\,
\log\Big(\frac{1}{n}\sum_{l=1}^{n} w_l e^{\sum_{j=1}^J \one\{Z_l=j\}\tau_j}\one\{Y_l\ge s\}\Big),
\end{align*}
and the estimating equation $\ref{eq:cox_wi}$ is equivalent to  $\nabla_\tau \mathcal{G}_n(\tau)=\mathbf{0}$.

\medskip
By the law of large numbers and Lemma~\ref{lem:weighted_moments},
as $n\to\infty$,
\begin{align*}
\mathcal{G}_n(\tau)\ \longrightarrow\ 
\mathcal{G}(\tau)
&:=\sum_{j=1}^J\tau_j\cdot E\!\left[h(X)R_j(0\mid X)\right] -\int_0^{T_0} -d\Big(\sum_{k=0}^J E[h(X)R_k(s\mid X)]\Big)\,
\log\Big(\sum_{k=0}^J e^{\tau_k}\,E[h(X)S_k(s\mid X)G_k(s)]\Big).
\end{align*}

\begin{lemma}[Population score vanishes at $\tau^w$]
\label{lem:score_zero}
Under (A1)--(A3) and the target marginal Cox model \eqref{eq:target_cox_app},
$\nabla_\tau \mathcal{G}(\tau)$ evaluated at $\tau=\tau^w$ equals $\mathbf{0}$.
\end{lemma}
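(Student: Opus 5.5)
We differentiate the limiting objective $\mathcal{G}(\tau)$ componentwise and show that each component of the gradient vanishes at $\tau^w$. To simplify the notation, write
\[
a_k(s)=E[h(X)S_k(s\mid X)G_k(s)], \qquad r_k(s)=E[h(X)R_k(s\mid X)].
\]
For $j\in\{1,\ldots,J\}$, differentiating under the integral (justified by boundedness on $[0,T_0]$ from (A5) and by positivity of $a_k$) gives
\[
\partial_{\tau_j}\mathcal{G}(\tau)=r_j(0)-\int_0^{T_0}\frac{e^{\tau_j}a_j(s)}{\sum_{k=0}^J e^{\tau_k}a_k(s)}\,\Big(-d\sum_{k=0}^J r_k(s)\Big).
\]
The task is therefore to express both terms through the target marginal hazard and then invoke model \eqref{eq:target_cox_app}.

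\textbf{Step 1: the measure $-dr_k$.} By Lemma~\ref{lem:R_integral} and Fubini,
\[
r_k(s)=\int_s^{T_0}E[h(X)\lambda_k(u\mid X)S_k(u\mid X)]\,G_k(u)\,du,
\]
so
\[
-dr_k(s)=E[h(X)\lambda_k(s\mid X)S_k(s\mid X)]\,G_k(s)\,ds.
\]
Using the definition of $\lambda_k^{\text{target}}$, this equals
\[
\lambda_k^{\text{target}}(s)\,E[h(X)S_k(s\mid X)]\,G_k(s)\,ds=\lambda_k^{\text{target}}(s)\,a_k(s)\,ds,
\]
because $G_k$ does not depend on $X$ by (A3). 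In the same way, $r_j(0)=\int_0^{T_0}\lambda_j^{\text{target}}(s)a_j(s)\,ds$.

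\textbf{Step 2: substitute the model.} Under \eqref{eq:target_cox_app} we have $\lambda_k^{\text{target}}(s)=\lambda_0^{\text{target}}(s)e^{\tau_k^w}$. Hence
\[
-d\sum_k r_k(s)=\lambda_0^{\text{target}}(s)\sum_k e^{\tau_k^w}a_k(s)\,ds.
\]
At $\tau=\tau^w$ the denominator $\sum_k e^{\tau^w_k}a_k(s)$ cancels, and the integral term becomes
\[
\int_0^{T_0}e^{\tau_j^w}a_j(s)\lambda_0^{\text{target}}(s)\,ds=\int_0^{T_0}\lambda_j^{\text{target}}(s)a_j(s)\,ds=r_j(0).
\]
Thus $\partial_{\tau_j}\mathcal{G}(\tau^w)=0$ for each $j$.

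\textbf{Main obstacle.} The delicate point is Step 1, specifically factoring the censoring survivor $G_k(s)$ out of the $X$-expectation so that the ratio $E[h\lambda_kS_k]/E[hS_k]$ appears. This factorization is exactly what (A3) provides: censoring must be independent of $X$ within each arm. Under covariate-dependent censoring it fails, and the identity would break. I would also check two technical points:
\begin{itemize}
\item the sign convention of the Stieltjes integral $-d(\cdot)$, since $r_k$ is nonincreasing;
\item that the limit of the risk-set log term in $\mathcal{G}_n$ indeed uses the moments \eqref{eq:moment2} and \eqref{eq:moment3}, with the same $a_k$ in the numerator obtained by taking $u(Z)=\one\{Z=j\}e^{\tau_j}$.
\end{itemize}
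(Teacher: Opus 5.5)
Your proposal is correct and follows essentially the same route as the paper. You differentiate $\mathcal{G}$, use Lemma~\ref{lem:R_integral} to identify $-d\,E[h(X)R_k(s\mid X)]$ with $E[h(X)\lambda_k(s\mid X)S_k(s\mid X)]G_k(s)\,ds$, rewrite it via the target marginal Cox model as $\lambda_0^{\text{target}}(s)e^{\tau_k^w}E[h(X)S_k(s\mid X)G_k(s)]\,ds$ so the denominator cancels at $\tau^w$, and recover $E[h(X)R_j(0\mid X)]$; the only difference is cosmetic, since you convert $r_j(0)$ into the target-hazard form at the outset whereas the paper does so in its final step.
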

\begin{proof}
Fix $j\in\{1,\ldots,J\}$. Differentiating $\mathcal{G}(\tau)$ with respect to $\tau_j$ yields
\begin{align}
\frac{\partial \mathcal{G}(\tau)}{\partial \tau_j}
&=E[h(X)R_j(0\mid X)]
-\int_0^{T_0} -d\Big(\sum_{k=0}^J E[h(X)R_k(s\mid X)]\Big)\,
\frac{e^{\tau_j}\,E[h(X)S_j(s\mid X)G_j(s)]}
{\sum_{k=0}^J e^{\tau_k}\,E[h(X)S_k(s\mid X)G_k(s)]}.
\label{eq:pop_score}
\end{align}
It remains to show that the right-hand side of \eqref{eq:pop_score} equals $0$ at $\tau=\tau^w$.

First note from Lemma~\ref{lem:R_integral}
\begin{align}
\frac{d}{ds}E[h(X)R_k(s\mid X)]
&=-E\!\left[h(X)\lambda_k(s\mid X)S_k(s\mid X)\right]\,G_k(s),
\label{eq:dRh}
\end{align}
where $G_k(s)$ does not depend on $X$ by (A3).

Next, by the definition of $\lambda_k^{\text{target}}(s)$ and the marginal Cox model
\eqref{eq:target_cox_app},
\[
\lambda_k^{\text{target}}(s)
=\frac{E[h(X)\lambda_k(s\mid X)S_k(s\mid X)]}{E[h(X)S_k(s\mid X)]}
=\lambda_0^{\text{target}}(s)e^{\tau_k^w},
\]
which implies
\begin{equation}
E[h(X)\lambda_k(s\mid X)S_k(s\mid X)]
=\lambda_0^{\text{target}}(s)e^{\tau_k^w}\,E[h(X)S_k(s\mid X)].
\label{eq:key_identity}
\end{equation}
Substituting \eqref{eq:key_identity} into \eqref{eq:dRh} gives, at $\tau=\tau^w$,
\begin{align}
\frac{d}{ds}E[h(X)R_k(s\mid X)]
=-\lambda_0^{\text{target}}(s)e^{\tau_k^w}\,E[h(X)S_k(s\mid X)G_k(s)].
\label{eq:dRh_tauh}
\end{align}
Summing \eqref{eq:dRh_tauh} over $k=0,\ldots,J$ yields
\[
-d\Big(\sum_{k=0}^J E[h(X)R_k(s\mid X)]\Big)
=\lambda_0^{\text{target}}(s)\Big(\sum_{k=0}^J e^{\tau_k^w}E[h(X)S_k(s\mid X)G_k(s)]\Big)\,ds.
\]
Plugging this into the integral term of \eqref{eq:pop_score} (with $\tau=\tau^w$) gives
\begin{align*}
&\int_0^{T_0} -d\Big(\sum_{k=0}^J E[h(X)R_k(s\mid X)]\Big)\,
\frac{e^{\tau_j^w}\,E[h(X)S_j(s\mid X)G_j(s)]}
{\sum_{k=0}^J e^{\tau_k^w}\,E[h(X)S_k(s\mid X)G_k(s)]}\\
&\qquad =\int_0^{T_0}\lambda_0^{\text{target}}(s)e^{\tau_j^w}E[h(X)S_j(s\mid X)G_j(s)]\,ds.
\end{align*}

Finally, applying \eqref{eq:key_identity} with $k=j$ and Lemma~\ref{lem:R_integral} yields
\begin{align*}
    \int_0^{T_0}\lambda_0^{\text{target}}(s)e^{\tau_j^w}E[h(X)S_j(s\mid X)G_j(s)]\,ds =&\int_0^{T_0}E[h(X)\lambda_j(s\mid X)S_j(s\mid X)]\,G_j(s)\,ds \\
    =&E[h(X)R_j(0\mid X)].
\end{align*}
Therefore, \eqref{eq:pop_score} equals $0$ at $\tau=\tau^w$, which proves the claim.
\end{proof}

Lemma \ref{lem:score_zero} implies that
\[
\hat\tau(\gamma^*) \xrightarrow{p} \tau^w.
\]

We now show that replacing the true weights $w_i=w_i(\gamma^*)$ by the estimated weights
$\hat w_i=w_i(\hat\gamma) $ does not affect consistency.
By Assumption (A4), $\hat\gamma \xrightarrow{p} \gamma^*$, and hence
\[
\frac{1}{n}\sum_{i=1}^n
| w_i(\hat\gamma)-w_i(\gamma^*) |
\xrightarrow{p} 0.
\]
Moreover, the estimating function $\psi_i(\tau,\gamma)$ is continuous in
$\gamma$ and dominated by an integrable envelope under the overlap condition. It follows that
\[
\sup_{\tau}
\left\|
\frac{1}{n}\sum_{i=1}^n \psi_i(\tau,\hat\gamma)
-
\frac{1}{n}\sum_{i=1}^n \psi_i(\tau,\gamma^*)
\right\|
=o_p(1).
\]

Consequently,
\[
\hat\tau - \hat\tau(\gamma^*) = o_p(1),
\]
and therefore
\[
\hat\tau \xrightarrow{p} \tau^w.
\]

\subsection{Asymptotic linear representation}
Define the balancing weight function for unit $i$ and treatment $j$ with parameter $\gamma$:
\[
w_i(\gamma)=h(X_i)/e_j(X_i;\gamma),
\]
and $(\hat\tau,\hat\gamma)$ is the solution to the joint estimating equations:
\begin{equation}\label{eq:joint}
\sum_{i=1}^n
\phi_i(\tau,\gamma)=\sum_{i=1}^n
\begin{pmatrix}
\psi_i(\tau,\gamma)\\
\pi_i(\gamma)
\end{pmatrix}
=
\mathbf 0,
\end{equation}
where
\[
\psi_i(\tau,\gamma)=w_i(\gamma)\delta_i\{\mathbf{D}_i-\bar{\mathbf{D}}(Y_i;\tau,\gamma)\},\quad \pi_i(\gamma)=\big(\mathbf D_i-\mathbf e(X_i;\gamma)\big)\otimes X_i.
\]

Define the counting and at-risk processes
\[
Y_i(t)=\one\{Y_i\ge t\},\qquad
N_i(t)=\one\{Y_i\le t,\ \delta_i=1\},
\]
and the weighted empirical cumulative event process
\[
N^w(t; \gamma)=\sum_{i=1}^n w_i(\gamma) N_i(t),\qquad
\hat F_n(t; \gamma)=\frac{N^w(t;\gamma)}{n}.
\]

Let
\[
s^{(0)}(t;\tau, \gamma)=E[w_i(\gamma) Y_i(t)e^{\eta_i(\tau)}],\qquad
\mathbf{s}^{(1)}(t;\tau, \gamma)=E[w_i(\gamma) Y_i(t)\mathbf{D}_i e^{\eta_i(\tau)}],
\]
and define
\[
\bar{\mathbf d}(t;\tau, \gamma)=\mathbf{s}^{(1)}(t;\tau, \gamma)/s^{(0)}(t;\tau, \gamma).
\]

Assume that, uniformly in $t$,
\[
\sup_t \|S^{(r)}(t;\tau, \gamma)-s^{(r)}(t;\tau, \gamma)\|\xrightarrow{p}0,
\qquad
\hat F_n(t; \gamma)\xrightarrow{p}F(t; \gamma).
\]
Then
\begin{align*}
    n^{-\frac{1}{2}}\sum_{i=1}^n \psi_i(\tau,\gamma)&=n^{-\frac{1}{2}}\sum_{i=1}^n w_i(\gamma)\delta_i\{
    \mathbf{D}_i-\bar{\mathbf{D}}(Y_i;\tau,\gamma)\}
    =n^{-\frac{1}{2}}\sum_{i=1}^n \int  w_i(\gamma)\{\mathbf{D}_i-\bar{\mathbf{D}}(t;\tau,\gamma)\}\, dN_i(t)\\
    &=n^{-\frac{1}{2}}\sum_{i=1}^n \int
    w_i(\gamma) \mathbf{D}_i\ dN_i(t)-n^{-\frac{1}{2}} \int \bar{\mathbf{D}}(t;\tau,\gamma)d(\sum_i w_i(\gamma) N_i(t))\\
    &=n^{-\frac{1}{2}}\sum_{i=1}^n \int
    w_i(\gamma) \mathbf{D}_i\ dN_i(t)-n^{\frac{1}{2}} \int \bar{\mathbf{D}}(t;\tau,\gamma)d\hat{F}_n(t; \gamma).
\end{align*}

A first-order Taylor expansion of $\bar{\mathbf{D}}(t;\tau,\gamma)$ around
$\bar{\mathbf d}(t;\tau,\gamma)$ yields
\begin{align*}
\bar{\mathbf{D}}(t;\tau,\gamma)=&\bar{\mathbf{d}}(t;\tau,\gamma)+
\frac{\mathbf{S}^{(1)}(t;\tau,\gamma) -\mathbf{s}^{(1)}(t;\tau,\gamma)}{s^{(0)}(t;\tau,\gamma)}
-\frac{\bar{\mathbf{d}}(t;\tau,\gamma)}{s^{(0)}(t;\tau,\gamma)}
\big\{S^{(0)}(t;\tau,\gamma)-s^{(0)}(t;\tau,\gamma)\big\} +o_p(n^{-1/2})\\
=&\bar{\mathbf{d}}(t;\tau,\gamma)+
\frac{1}{s^{(0)}(t;\tau,\gamma)}[\mathbf{S}^{(1)}(t;\tau,\gamma)
-\frac{\mathbf{s}^{(1)}(t;\tau,\gamma)}{s^{(0)}(t;\tau,\gamma)}
S^{(0)}(t;\tau,\gamma)]+o_p(n^{-1/2}).
\end{align*}

Substituting this expansion and rearranging terms gives
\begin{align*}
    &n^{-\frac{1}{2}}\sum_{i=1}^n \psi_i(\tau^w,\gamma^*)
    =n^{-\frac{1}{2}}\sum_{i=1}^n \int
    w_i(\gamma^*) \mathbf{D}_i\ dN_i(t)-n^{\frac{1}{2}} \int \bar{\mathbf{D}}(t;\tau^w,\gamma^*)d\hat{F}_n(t; \gamma^*)\\
    =&n^{-\frac{1}{2}}\sum_{i=1}^n \int
    w_i(\gamma^*) \mathbf{D}_i\ dN_i(t)\\
    &-n^{\frac{1}{2}} \int \{\bar{\mathbf{d}}(t;\tau^w,\gamma^*)+\frac{1}{s^{(0)}(t;\tau^w,\gamma^*)}[\mathbf{S}^{(1)}(t;\tau^w,\gamma^*)-\frac{\mathbf{s}^{(1)}(t;\tau^w,\gamma^*)}{s^{(0)}(t;\tau^w,\gamma^*)}S^{(0)}(t;\tau^w,\gamma^*)]\}d\hat{F}_n(t; \gamma^*)+o_p(1)\\
    =&n^{-\frac{1}{2}}\sum_{i=1}^n \int
    w_i(\gamma^*) \mathbf{D}_i\ dN_i(t)-n^{\frac{1}{2}} \int \bar{\mathbf{d}}(t;\tau^w,\gamma^*)d\hat{F}_n(t; \gamma^*)\\
    &-n^{\frac{1}{2}} \int \{\frac{1}{s^{(0)}(t;\tau^w,\gamma^*)}[\mathbf{S}^{(1)}(t;\tau^w,\gamma^*)-\frac{\mathbf{s}^{(1)}(t;\tau^w,\gamma^*)}{s^{(0)}(t;\tau^w,\gamma^*)}S^{(0)}(t;\tau^w,\gamma^*)]\}d\hat{F}_n(t; \gamma^*)+o_p(1)\\
    =&n^{-\frac{1}{2}}\sum_{i=1}^n \int
    w_i(\gamma^*) (\mathbf{D}_i-\bar{\mathbf{d}}(t;\tau^w,\gamma^*)) dN_i(t)\\
    &-n^{\frac{1}{2}} \int \{\frac{1}{s^{(0)}(t;\tau^w,\gamma^*)}[\mathbf{S}^{(1)}(t;\tau^w,\gamma^*)-\frac{\mathbf{s}^{(1)}(t;\tau^w,\gamma^*)}{s^{(0)}(t;\tau^w,\gamma^*)}S^{(0)}(t;\tau^w,\gamma^*)]\}dF(t; \gamma^*)+o_p(1)\\
    =&n^{-\frac{1}{2}}\sum_{i=1}^n \int
    w_i(\gamma^*) (\mathbf{D}_i-\bar{\mathbf{d}}(t;\tau^w,\gamma^*)) dN_i(t)\\
    &-n^{\frac{1}{2}} \int \frac{\frac{1}{n}\sum_{i=1}^n w_i(\gamma^*) Y_i(t)\exp\{\eta_i(\tau^w)\}}{s^{(0)}(t;\tau^w,\gamma^*)}[\mathbf{D}_i-\bar{\mathbf{d}}(t;\tau^w,\gamma^*)]dF(t; \gamma^*)+o_p(1)\\
    =&n^{-1/2}\sum_{i=1}^n \psi^*_i(\tau^w,\gamma^*)+o_p(1),
\end{align*}
where the influence function is
\begin{align*}
\psi^*_i(\tau, \gamma)
&=w_i(\gamma)\int \{\mathbf{D}_i-\bar{\mathbf d}(t;\tau, \gamma)\}\,dN_i(t)-
w_i(\gamma)\int
\frac{Y_i(t)e^{\eta_i(\tau)}}{s^{(0)}(t;\tau, \gamma)}
\{\mathbf{D}_i-\bar{\mathbf d}(t;\tau, \gamma)\}\,dF(t; \gamma).
\end{align*}

Notice that $n^{-1/2}\sum_{i=1}^n \psi^*_i(\tau^w,\gamma^*)$ and $n^{-1/2}\sum_{i=1}^n \psi_i(\tau^w,\gamma^*)$ are asymptotically equivalent, we can define the i.i.d.-equivalent contribution:
\begin{equation*}
\phi_i^*(\tau,\gamma)=
\begin{pmatrix}
\psi^*_i(\tau,\gamma)\\
\pi_i(\gamma)
\end{pmatrix}.
\end{equation*}

\subsection{Asymptotic normality and variance estimation}

By multivariate central limit theorem,
\[
n^{-1/2}\sum_{i=1}^n \phi_i^*(\tau^w,\gamma^*)
\xrightarrow{d}
\mathcal{N}\big(0,\mathbf{B}(\tau^w,\gamma^*)\big),
\qquad
\mathbf{B}(\tau^w,\gamma^*)=E[\phi_i^*(\tau^w,\gamma^*)\phi_i^*(\tau^w,\gamma^*)^{\prime}].
\]

A first-order Taylor expansion of \eqref{eq:joint} around $(\tau^w,\gamma^*)$ yields
\begin{equation*}
\sqrt n
\begin{pmatrix}
\hat\tau-\tau^w\\
\hat\gamma-\gamma^*
\end{pmatrix}
=
\mathbf{A}(\tau^w,\gamma^*)^{-1}
\frac{1}{\sqrt n}\sum_{i=1}^n
\phi_i(\tau^w,\gamma^*)
+o_p(1),
\end{equation*}
where
\[
\mathbf{A}(\tau,\gamma)
=
- E\!\left[
\frac{\partial}{\partial(\tau^{'},\gamma^{'})}
\phi_i(\tau,\gamma)
\right].
\]
Thus
\[
\sqrt n
\begin{pmatrix}
\hat\tau-\tau^w\\
\hat\gamma-\gamma^*
\end{pmatrix}
\xrightarrow{d}
\mathcal{N}\!\left(
\mathbf 0,\ 
\mathbf{A}(\tau^w,\gamma^*)^{-1}\mathbf{B}(\tau^w,\gamma^*)\mathbf{A}(\tau^w,\gamma^*)^{-T}
\right).
\]

Replacing population quantities by their empirical counterparts yields
the sandwich variance estimator

\[
\widehat{\mathrm{Var}}
\begin{pmatrix}
\hat\tau\\
\hat\gamma
\end{pmatrix}
=
\hat{\mathbf{A}}(\hat\tau,\hat\gamma)^{-1}
\hat{\mathbf{B}}(\hat\tau,\hat\gamma)
\hat{\mathbf{A}}(\hat\tau,\hat\gamma)^{-T}.
\]
where
\[
\hat{\mathbf{A}}(\tau,\gamma)
=-\frac{1}{n}\sum_{i=1}^n
\frac{\partial}{\partial(\tau^{'},\gamma^{'})}
\phi_i(\tau,\gamma), \qquad
\hat{\mathbf{B}}(\tau,\gamma)=\frac{1}{n}\sum_{i=1}^n
\Phi_i(\tau,\gamma)\Phi_i(\tau,\gamma)^{\prime},
\]
with
\[
\Phi_i(\tau,\gamma)=\begin{pmatrix}
\Psi_i(\tau,\gamma)\\
\pi_i(\gamma)
\end{pmatrix},
\]
and
\begin{align*}
\Psi_i(\tau,\gamma)
=
w_i(\gamma)\delta_i\{\mathbf{D}_i-\bar{\mathbf{D}}(Y_i;\tau,\gamma)\}-
w_i(\gamma)\sum_{j=1}^n
\frac{w_j(\gamma)\delta_j I(Y_j\le Y_i)e^{\eta_i(\tau)}}
     {S^{(0)}(Y_j;\tau,\gamma)}
\{\mathbf{D}_i-\bar{\mathbf{D}}(Y_j;\tau,\gamma)\}.
\end{align*}

Then a consistent estimator of $\mathrm{Var}(\hat\tau)$ is the submatrix formed by the first $J$ rows and $J$ columns:
\[
\widehat{\mathrm{Var}}(\hat\tau)
=
\left[
\hat{\mathbf{A}}(\hat\tau,\hat\gamma)^{-1}
\hat{\mathbf{B}}(\hat\tau,\hat\gamma)
\hat{\mathbf{A}}(\hat\tau,\hat\gamma)^{-T}
\right]_{(1:J, 1:J)}.
\]

\newpage
\begin{center}
    \Large\bf Supplementary Materials
\end{center}

\setcounter{section}{0}
\renewcommand{\thesection}{\Alph{section}}
\setcounter{equation}{0}
\numberwithin{equation}{section}
\section{Illustration with data and R code for a two-way factorial randomized trial }

\subsection{The COMBINE study of Lin et al. (2016)}
This illustration concerns a randomized controlled trial with a two-way factorial design. Specifically, we applied the method to reanalyze the COMBINE study in Lin et al. \cite{lin2016simultaneous} to evaluate the efficacy of medication, behavioral therapy, and their combination for treating alcohol dependence. The Data are publicly available in the supplementary material of Lin et al. at \url{https://pmc.ncbi.nlm.nih.gov/articles/instance/5026867/bin/NIHMS760681-supplement-Supp_Info_Code.zip}. The clean data file \texttt{data-Lin.csv} that is used in the following code is available at \url{https://www2.stat.duke.edu/~fl35/data/data-Lin.csv}.

In the study, 1,226 eligible alcohol-dependent individuals were randomly assigned to receive medical treatment of naltrexone or placebo and were also randomly assigned to receive combined behavioral intervention (CBI).  The outcome is the time to the first heavy drinking day. Pre-treatment covariates are baseline percentage of days abstinent (within 30 days prior to the participant’s last drink) and research site. The no naltrexone no CBI, naltrexone alone, CBI alone, both naltrexone and CBI arms have 305, 302, 307, 312 participants, respectively, with a mean time to the first event of 21.8, 26.2, 24.6, and 23.0 weeks, respectively. 

We use a multinomial logistic model with a main effect of each covariate to estimate the propensity scores, and then fit the weighted Cox model (equation (3) in Section 2.2 of the main text) with a main effect of each of the four treatment arms, using the no naltrexone no CBI as the baseline group. Figure \ref{fig:lin-OW} shows the overlap weighted Kaplan-Meier curves of each arm. Because it is a randomized trial, there is little difference in the results between different weighting schemes, all similar to the unadjusted results. With overlap weights, naltrexone alone shows a significant protective effect (HR: 0.80; 95\% CI: 0.66--0.96), while CBI alone shows a smaller, nonsignificant effect (HR: 0.88; 95\% CI: 0.74--1.05). Similar to Lin et al. \cite{lin2016simultaneous}, their combination yields the smallest, nonsignificant effect (HR: 0.95; 95\% CI: 0.79--1.15), suggesting a negative interaction between naltrexone and behavioral interventions. These results are consistent with Figure \ref{fig:lin-OW}, where the naltrexone-alone arm shows the highest survival probability throughout follow-up.

\begin{figure}[ht]
    \centering
    \includegraphics[width=0.5\linewidth]{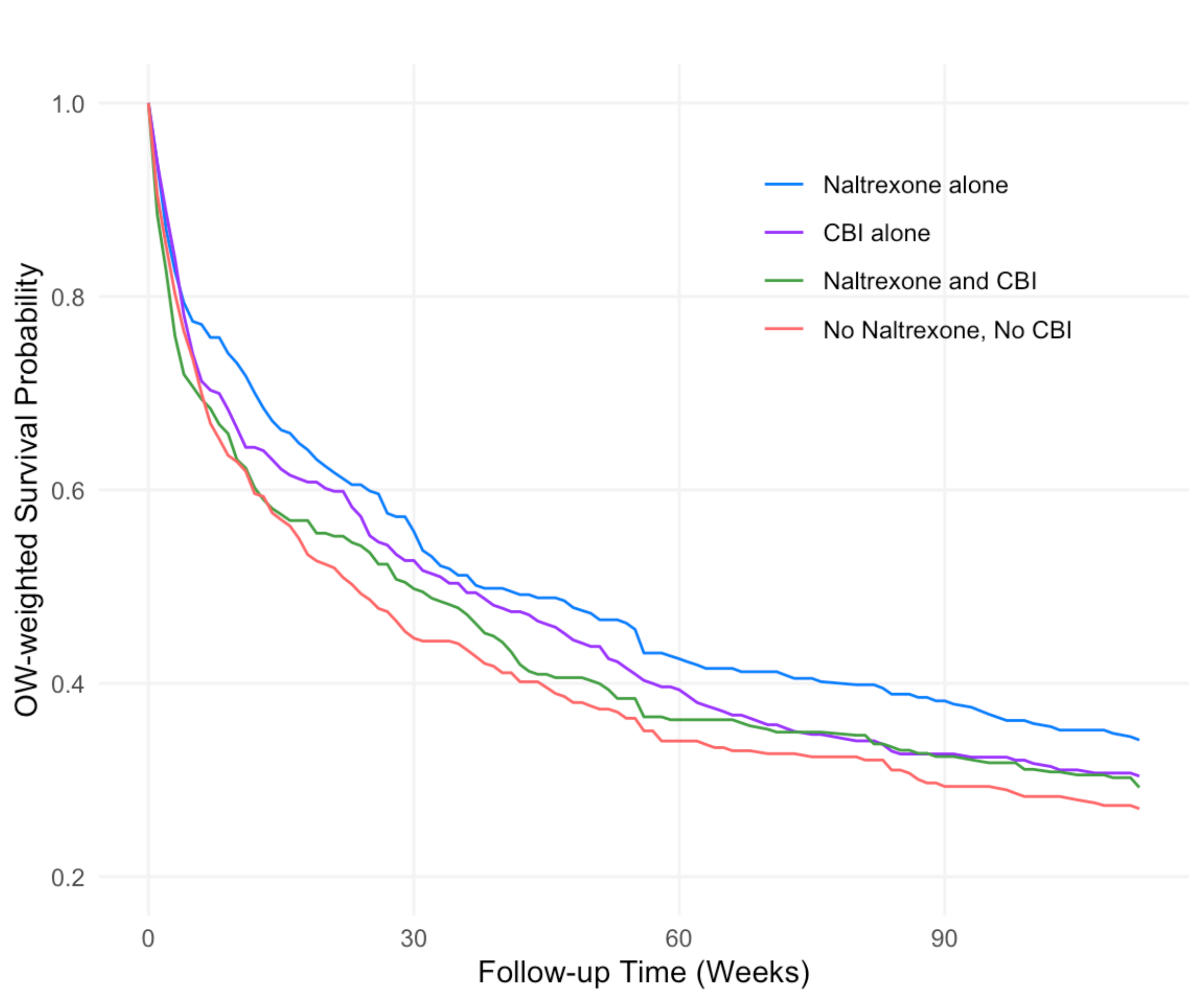}
    \caption{Overlap-Weighted Kaplan-Meier Curves of the COMBINE Study}
    \label{fig:lin-OW}
\end{figure}

\subsection{The R code}
\begin{lstlisting}[language=R]
# Load dependent packages
library(tidyverse)
library(PSsurvival)

# Import data (in .csv format). Data Description:
# Treatment: Two binary treatment variables: `NALTREXONE` and `THERAPY.`
# Outcome: `futime` (observed time); `relapse` (1 if event, 0 if censored, at `futime`)
# Covariates: `AGE`, `GENDER`, `T0_PDA`, `site`. Details see Lin et al. 2016.
data_lin2016 <- read.csv("data-Lin.csv", header = TRUE, sep = ",")

# Code 2*2 factorial treatments to a new variable `trt` with 4 unique groups.
data_lin2016 <- data_lin2016 %>% 
  mutate(trt = case_when(
    NALTREXONE == 0 & THERAPY == 0 ~ "No Naltrexone, No CBI",
    NALTREXONE == 1 & THERAPY == 0 ~ "Naltrexone alone",
    NALTREXONE == 0 & THERAPY == 1 ~ "CBI alone",
    NALTREXONE == 1 & THERAPY == 1 ~ "Naltrexone and CBI",
    TRUE ~ NA))

# Estimate (log) MHR by the OW-weighted Cox model using PSsurvival::marCoxph(),
# with "No Naltrexone, No CBI" being the reference group, and propensity scores
# estimated by multinomial logistic with covariates age, gender, T0_PDA, and site
marcox_lin_ow <- marCoxph(
  data = data_lin2016, 
  ps_formula = trt ~ AGE + GENDER + T0_PDA + site,
  time_var = "futime", 
  event_var = "relapse",
  reference_level = "No Naltrexone, No CBI",
  weight_method  = "OW", # if "IPW", use IPW weights
  variance_method = "bootstrap", # if "robust", use robust variance estimator
  B = 200, # Number of Bootstrap iterations 
  seed = 2026)

summary(marcox_lin_ow, round.digits = 2)

# OW-weighted Kaplan-Meier curves by treatment group using PSsurvival::weightedKM()
KM_lin_ow <- weightedKM(
  data = data_lin2016, 
  treatment_var = "trt",
  time_var = "futime",
  event_var = "relapse",
  ps_formula = trt ~ AGE + GENDER + T0_PDA + site,
  weight_method  = "OW" # if "IPW", use IPW weights
  )

## Plot 1: with default aesthetic settings
plot(KM_lin_ow, type = "Kaplan-Meier")

## Plot 2: with fine-tuned aesthetic settings
plot(KM_lin_ow, 
     type = "Kaplan-Meier", # if type="CR", draw cumulative risk curves
     include_CI = FALSE,
     strata_to_plot = c("Naltrexone alone", "CBI alone",
                        "Naltrexone and CBI", "No Naltrexone, No CBI"),
     strata_colors = c("#0080FF", "#9933FF", "#40A040", "#FF6666"),
     curve_width = 0.5,
     legend_title = "",
     plot_title = "",
     ylab = "OW-weighted Survival Probability",
     xlab = "Follow-up Time (Weeks)",
     ylim = c(0.2, 1.0)
) + ggplot2::theme(legend.position = "inside", legend.position.inside = c(0.75,0.8))


# Other summary statistics:
## 1. sample size per group
table(data_lin2016$trt)
## 2. Average time to the first event (relapse) per group
data_lin2016 %>%
  filter(relapse == 1) %>%
  group_by(trt) %>%
  summarise(ave_eventtime = mean(futime))
    
\end{lstlisting}

\section{Additional information of Simulations}
\begin{table}[ht]
\centering
\caption{Empirical event rates at selected time points (multi-treatment, $\psi=1$)}
\label{tab:eventrate_multi_psi1}
\begin{tabular}{lccccccc}
\toprule
Group & $t=0.2$ & $t=0.5$ & $t=1$ & $t=2$ & $t=3$ & $t=5$ & $t=10$ \\
\midrule
\multicolumn{8}{c}{Censoring rate: $25\%$} \\[0.3em]
Overall & 0.25 & 0.43 & 0.56 & 0.67 & 0.71 & 0.74 & 0.75 \\
$Z=0$   & 0.21 & 0.40 & 0.56 & 0.68 & 0.73 & 0.76 & 0.77 \\
$Z=1$   & 0.44 & 0.67 & 0.80 & 0.87 & 0.88 & 0.89 & 0.89 \\
$Z=2$   & 0.09 & 0.20 & 0.33 & 0.46 & 0.52 & 0.57 & 0.60 \\
\midrule
\multicolumn{8}{c}{Censoring rate: $50\%$} \\[0.3em]
Overall & 0.23 & 0.37 & 0.45 & 0.49 & 0.50 & 0.50 & 0.50 \\
$Z=0$   & 0.19 & 0.34 & 0.44 & 0.48 & 0.49 & 0.49 & 0.49 \\
$Z=1$   & 0.42 & 0.60 & 0.67 & 0.70 & 0.70 & 0.70 & 0.70 \\
$Z=2$   & 0.08 & 0.17 & 0.24 & 0.29 & 0.30 & 0.30 & 0.30 \\
\bottomrule
\end{tabular}
\end{table}

\begin{table}[ht]
\centering
\caption{Empirical event rates at selected time points (factorial design, $\psi=1$)}
\label{tab:eventrate_factorial_psi1}
\begin{tabular}{lccccccc}
\toprule
Group & $t=0.2$ & $t=0.5$ & $t=1$ & $t=2$ & $t=3$ & $t=5$ & $t=10$ \\
\midrule
\multicolumn{8}{c}{Censoring rate: $25\%$} \\[0.3em]
Overall & 0.25 & 0.43 & 0.57 & 0.67 & 0.71 & 0.74 & 0.75 \\
$Z=00$  & 0.20 & 0.39 & 0.54 & 0.66 & 0.71 & 0.74 & 0.75 \\
$Z=10$  & 0.44 & 0.66 & 0.79 & 0.86 & 0.88 & 0.88 & 0.88 \\
$Z=01$  & 0.08 & 0.19 & 0.31 & 0.44 & 0.50 & 0.55 & 0.57 \\
$Z=11$  & 0.26 & 0.47 & 0.62 & 0.73 & 0.77 & 0.79 & 0.80 \\
\midrule
\multicolumn{8}{c}{Censoring rate: $50\%$} \\[0.3em]
Overall & 0.23 & 0.37 & 0.45 & 0.49 & 0.50 & 0.50 & 0.50 \\
$Z=00$  & 0.19 & 0.33 & 0.42 & 0.47 & 0.48 & 0.48 & 0.48 \\
$Z=10$  & 0.41 & 0.59 & 0.66 & 0.69 & 0.69 & 0.69 & 0.69 \\
$Z=01$  & 0.08 & 0.16 & 0.23 & 0.28 & 0.29 & 0.29 & 0.29 \\
$Z=11$  & 0.24 & 0.40 & 0.49 & 0.53 & 0.54 & 0.54 & 0.54 \\
\bottomrule
\end{tabular}
\end{table}

\begin{figure}[ht]
    \centering
    \includegraphics[width=0.75\linewidth]{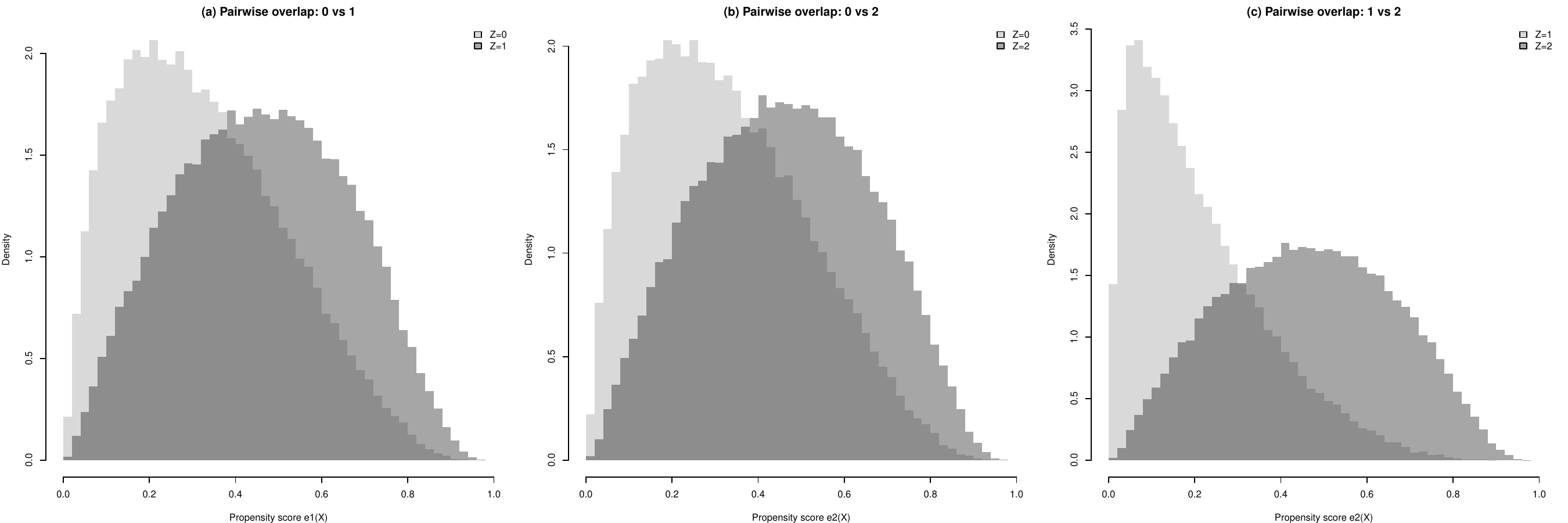}
    \caption{Multiple Treatment and Strong Overlap $\psi=1$}
    \label{fig:ps_overlap_psi1}
\end{figure}
\begin{figure}[ht]
    \centering
    \includegraphics[width=0.75\linewidth]{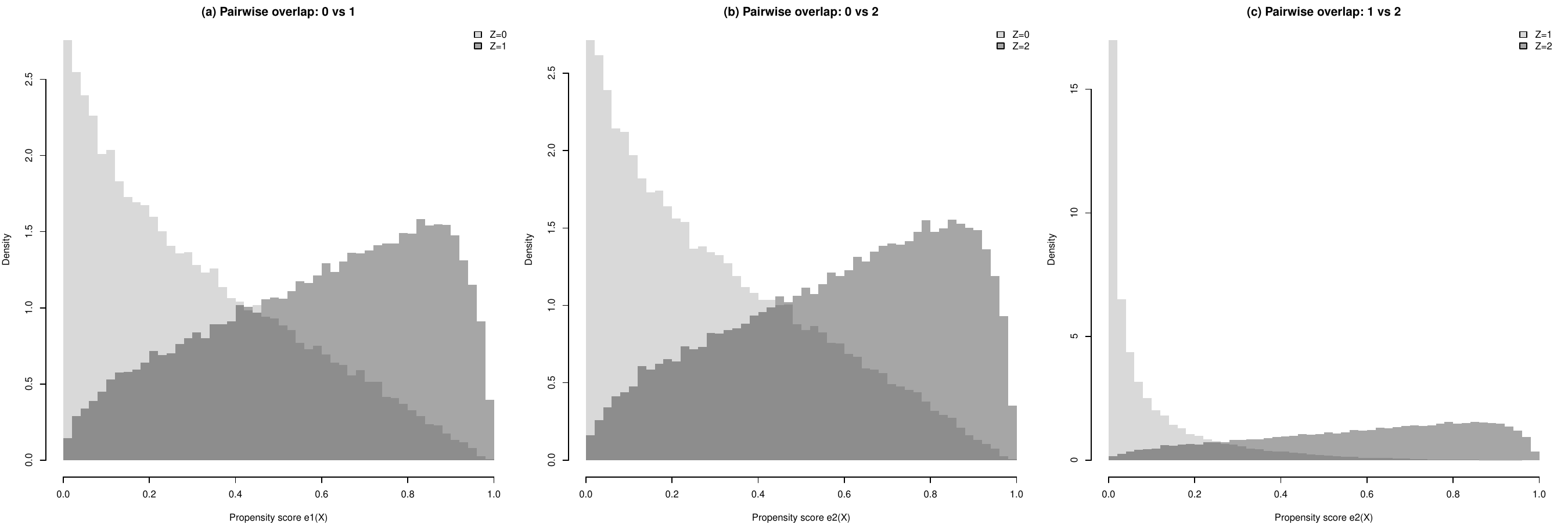}
    \caption{Multiple Treatment and Moderate Overlap $\psi=2$}
    \label{fig:ps_overlap_psi2}
\end{figure}
\begin{figure}[ht]
    \centering
    \includegraphics[width=0.75\linewidth]{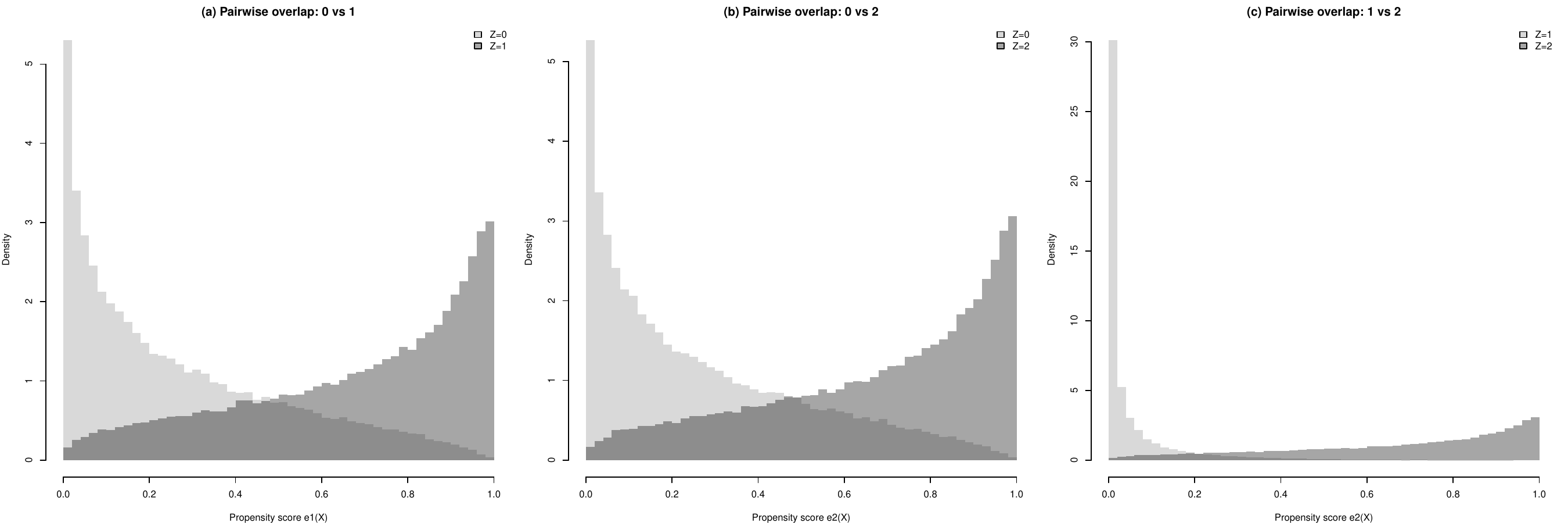}
    \caption{Multiple Treatment and Weak Overlap $\psi=3$}
    \label{fig:ps_overlap_psi3}
\end{figure}

\begin{figure}
    \centering
    \includegraphics[width=1\linewidth]{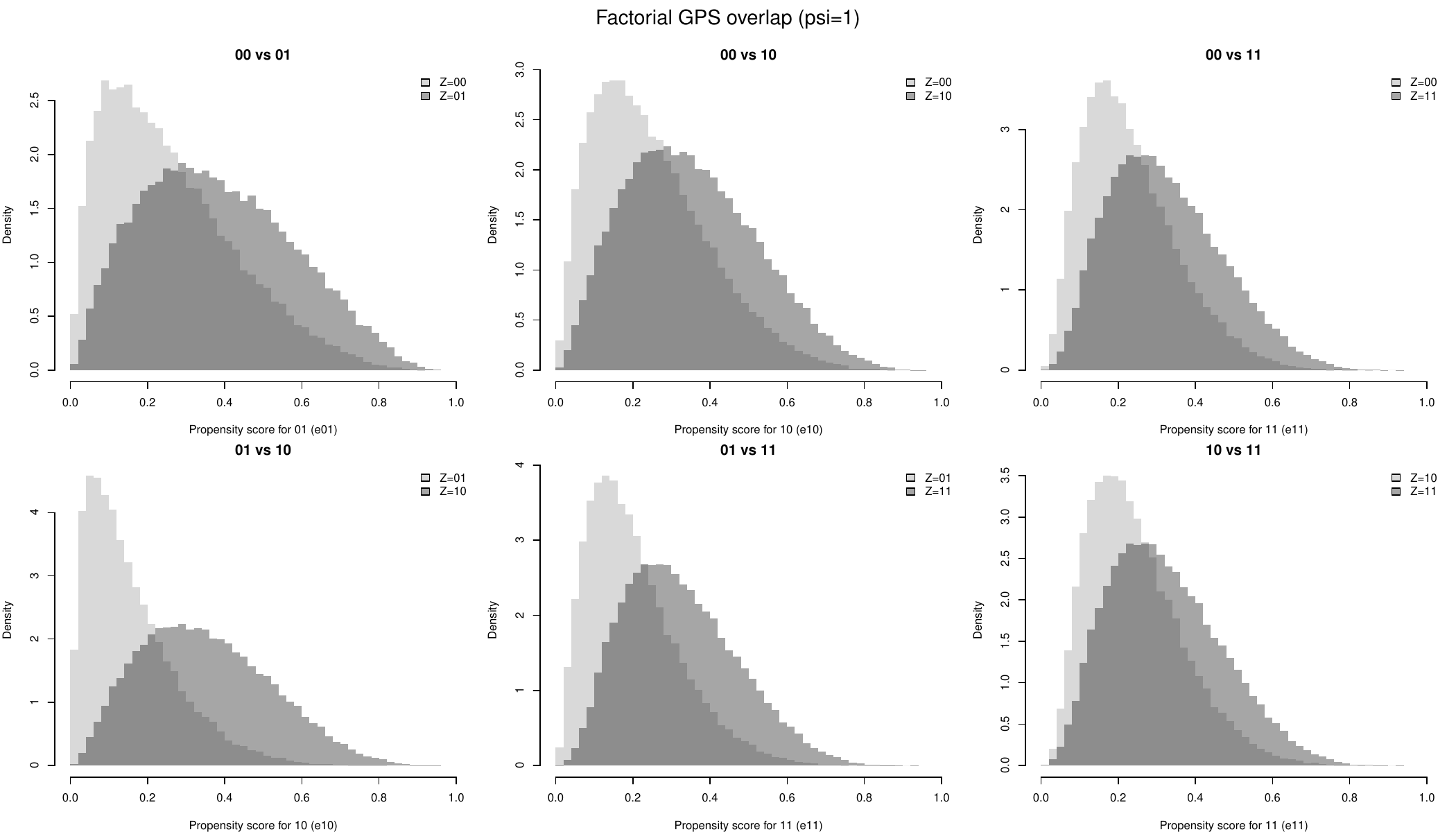}
    \caption{Two Way Factorial Design and Strong Overlap}
    \label{fig:ps_overlap_factorial_psi1}
\end{figure}
\begin{figure}
    \centering
    \includegraphics[width=1\linewidth]{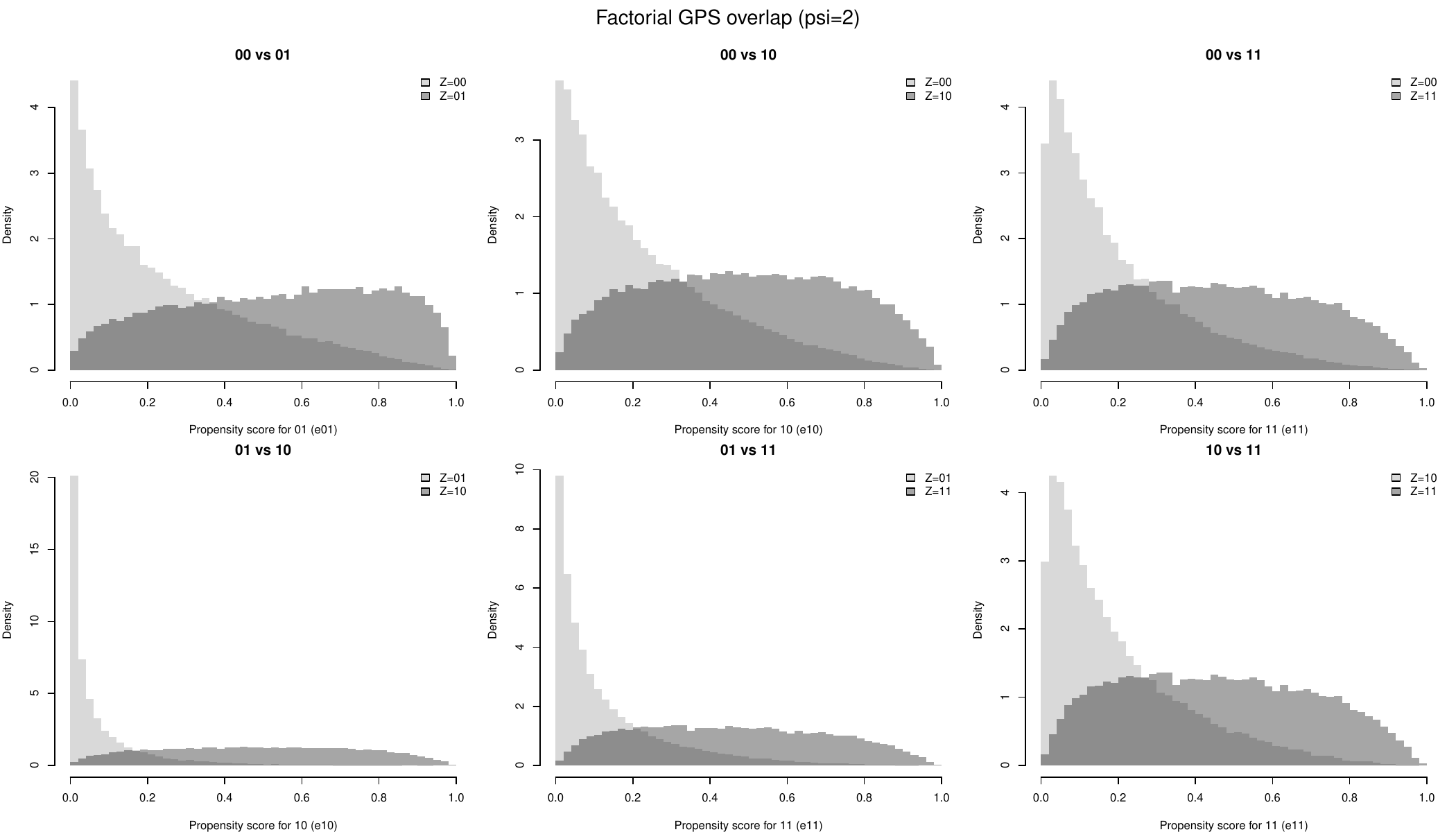}
    \caption{Two Way Factorial Design and Moderate Overlap}
    \label{fig:ps_overlap_factorial_psi2}
\end{figure}
\begin{figure}
    \centering
    \includegraphics[width=1\linewidth]{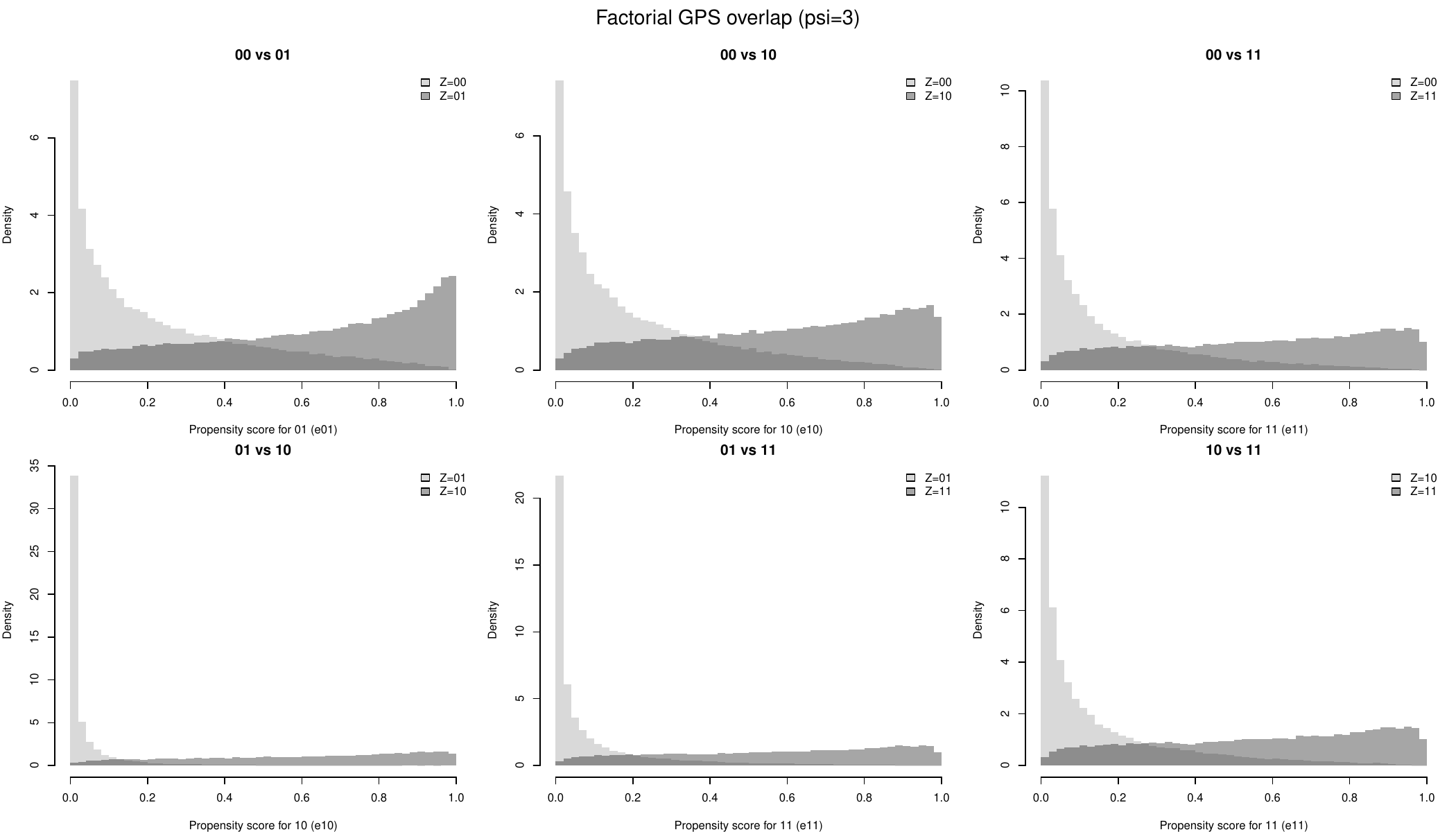}
    \caption{Two Way Factorial Design and Weak Overlap}
    \label{fig:ps_overlap_factorial_psi3}
\end{figure}

\end{document}